\theoremstyle{plain}
\newtheorem{theorem}{Theorem}[section]
\newtheorem{lemma}[theorem]{Lemma}
\newtheorem{corollary}[theorem]{Corollary}
\newtheorem{proposition}[theorem]{Proposition}
\newtheorem{remark}[theorem]{Remark}
\newtheorem{example}[theorem]{Example}
\def\section{\@startsection{section}{1}%
  \z@{1.5\linespacing\@plus\linespacing}{.5\linespacing}%
  {\normalfont\bfseries\large\centering}}
\def\CC{{\mathbb C}}
\def\RR{{\mathbb R}}
\def\NN{{\mathbb N}}
\def\calO{\mathcal O}
\def\({\left(}
\def\){\right)}
\def\<{\left\langle}
\def\>{\right\rangle}
\def\eps{\varepsilon}
\DeclareMathOperator{\RE}{Re}
\numberwithin{equation}{section}
\newcommand{\be}{\begin{equation}}
\newcommand{\ee}{\end{equation}}
\newcommand{\bea}{\begin{eqnarray}}
\newcommand{\eea}{\end{eqnarray}}
\newcommand{\bee}{\begin{eqnarray*}}
\newcommand{\eee}{\end{eqnarray*}}
\def\ds{\displaystyle}
\def\ni{\noindent}
\def\bs{\bigskip}
\def\ms{\medskip}
\def\eps{\varepsilon}
\def\fref#1{{\rm (\ref{#1})}}
\def\pref#1{{\rm \ref{#1}}}
\def\pa{\partial}
\def\na{\nabla}
\def\calH{{\mathcal H}}
\def\bx{{\bold x}}
\def\psie{{\Psi^{\eps,\alpha}}}
\def\beq{{A^{\eps,\alpha}}}
\def\beqo{{A^{0,\alpha}}}
\def\beqs{{A^{\eps,0}}}
\def\beqoo{{A}}
\begin{document}

\title[Dimension reduction for GPE]{Dimension reduction for anisotropic Bose-Einstein condensates in the strong interaction regime}
\author[W. Bao]{Weizhu Bao}
\email{matbaowz@nus.edu.sg}
\address{Department of Mathematics, National University of Singapore, Singapore 119076, Singapore}
\author[L. Le Treust]{Lo\"{i}c Le Treust}
\email{loic.letreust@univ-rennes1.fr}
\address{IRMAR, Universit\'e de Rennes 1 and INRIA, IPSO Project}
\author[F. M\'ehats]{Florian M\'ehats}
\email{florian.mehats@univ-rennes1.fr}
\address{IRMAR, Universit\'e de Rennes 1 and INRIA, IPSO Project}

\begin{abstract}
We study the problem of dimension reduction for the three dimensional Gross-Pitaevskii equation (GPE) describing a Bose-Einstein condensate confined in a strongly anisotropic harmonic trap. Since the gas is assumed to be in a strong interaction regime, we have to analyze two combined singular limits: a semi-classical limit in the transport direction and the strong partial confinement limit in the transversal direction. We prove that both limits commute together and we provide convergence rates. The by-products of this work are approximated models in reduced dimension for the GPE, with a priori estimates of the approximation errors.
\end{abstract}
\maketitle

\section{Introduction and main results}
\label{sec:intro}

In this paper, we study dimension reduction for the three-dimensional Gross-Pitaevskii equation (GPE) modeling Bose-Einstein condensation \cite{Anderson,Bradley,Davis}. In contrast with the existing literature on this topic \cite{ben2005nonlinear,BenAbdallah2008154,abdallah2011second}, we will {\em not} assume that the gas is in a weak interaction regime.

Based on the mean field approximation \cite{LiebSeiringerPra2000,LiebSeiringer,Erdos},
the Bose-Einstein condensate is modeled by its wavefunction 
$\Psi:=\Psi(t,\bx)$ satisfying the GPE written in physical variables as
\be
\label{GPE1}
i\hbar \pa_t \Psi=-\frac{\hbar^2}{2m}\Delta \Psi+V(\bx)\Psi+Ng|\Psi|^2\Psi,
\ee
where $\Delta$ is the Laplace operator, $V(\bx)$ denotes the trapping harmonic potential, $m>0$ is the mass, $\hbar$ is the Planck constant, $g=\frac{4\pi\hbar^2a_s}{m}$ describes the interaction between atoms in the condensate with the $s$-wave scattering length $a_s$ and $N$ denotes the number of particules in the condensate. The wave function is normalized according to
$$\int |\Psi(t,\bx)|^2d\bx=1.$$

\subsection{Scaling assumptions}
We assume that the harmonic potential is strongly anisotropic and confines particles from dimension $n+d$ to dimension $n$. In applications, we will have $n+d=3$ and, either $n=2$ for disk-shaped condensates, or $n=1$ for cigar-shaped condensates. We shall denote $\bx=(x,z)$, where $x\in \RR^n$ denotes the variable in the confined direction(s) and $z\in \RR^d$ denotes the variable in the transversal direction(s). The harmonic potential reads \cite{Baocai2013,Pethick,PitaevskiiStringari}
$$V(\bx)=\frac{m}{2}\left({\omega}_x^2|x|^2+{\omega}_z^2|z|^2\right)$$
where ${\omega}_z\gg{\omega}_x$. We introduce the two dimensionless parameters
$$\eps=\sqrt{{\omega}_x/{{\omega}_z}},\qquad \beta=\frac{4\pi Na_s}{a_0},$$
where the harmonic oscillator length is defined by \cite{Baocai2013,Pethick,PitaevskiiStringari}
$$a_0=\(\frac{\hbar}{m{\omega}_x}\)^{1/2}.$$

Let us rewrite the GPE \fref{GPE1} in dimensionless form. For that, we introduce the new variables $\tilde t$, $\tilde x$, $\tilde z$ and the associated unknown $\widetilde \Psi$ defined by \cite{Baocai2013,Pethick,PitaevskiiStringari}
$$\tilde t={\omega}_x t,\qquad \tilde x=\frac{x}{a_0},\qquad \tilde z=\frac{z}{a_0},\qquad \widetilde \Psi(\tilde t,\tilde x,\tilde z)=a_0^{(n+d)/2}\Psi(t,x,z).$$
The dimensionless GPE equation reads \cite{Baocai2013,Pethick,PitaevskiiStringari}
\begin{equation}\label{GPE0}
i\pa_{\tilde t} \widetilde \Psi=-\frac{1}{2}\Delta \widetilde \Psi+\frac{1}{2}\left(|\tilde x|^2+\frac{1}{\eps^4}| \tilde z|^2\right)\widetilde \Psi+\beta|\widetilde \Psi|^2\widetilde \Psi.
\end{equation}
In order to observe the condensate at the correct space scales, we will now proceed to a rescaling in $x$ and $z$. Let us define
\begin{equation}
\label{alpha}
	\alpha=\eps^{2d/n}\beta^{-2/n},
\end{equation}
 and set
\[
t'=\tilde t,\qquad z'=\frac{\tilde z}{\eps},\qquad x'=\alpha^{1/2}\tilde x,
\]
which means that the typical length scales of the dimensionless variables are $\eps$ in the $z$-direction and $\alpha^{-1/2}$ in the $x$-direction. The wavefunction is rescaled as follows:
$$\psie(t',x',z'):=\eps^{d/2}\alpha^{-n/4}\widetilde \Psi(\tilde t,\tilde x,\tilde z)e^{i\tilde td/2\eps^2}.$$
Notice that the $L^2$ norm of $\psie$ is left invariant by this rescaling, so we still have
$$\int_{\RR^{n+d}} |\Psi^{\eps,\alpha}(t,x,z)|^2dxdz=1.$$
%
We end up with the following rescaled GPE (for simplicity we omit the primes on the variables):
\begin{equation}\label{gpe}
i\alpha \pa_t \psie=\frac{\alpha}{\eps^2}\calH_z\psie-\frac{\alpha^2}{2}\Delta_x \psie+\frac{|x|^2}{2}\psie+\alpha|\psie|^2\psie
\end{equation}
where the transversal Hamiltonian is $$\calH_z:=-\frac{1}{2}\Delta_z+\frac{|z|^2}{2}-\frac{d}{2}$$ and the scaling assumptions are $$\alpha\ll 1\quad\mbox{and}\quad \eps\ll 1.$$ The spectrum of $\calH_z$ is the set of integers $\NN$, its ground state (associated to the eigenvalue 0) is ${\omega_0}(z)=\pi^{-d/4}e^{-|z|^2/2}$.

\bs
The dimension reduction of the GPE (\ref{GPE0}) from three dimensions (3D) to lower dimensions was studied formally
in \cite{BaoJakschP,BaoP} and numerically in \cite{BaoGeJakschPW} for fixed $\beta$ when $\varepsilon\to0$. 
The mathematical rigorous justification for this dimension reduction was given 
in \cite{ben2005nonlinear,BenAbdallah2008154,abdallah2011second,BaoBenCai} for $\alpha=1$  and $\eps\ll1$ in 
 (\ref{gpe}) so that $\beta= \eps^d\ll1$ in (\ref{GPE0}), which corresponds to a weak interaction regime
in the GPE (\ref{GPE0}). However, it is an open problem to justify mathematically the dimension reduction
of the GPE (\ref{GPE0}) in the strong interaction regime, i.e. for fixed $\beta$ when $\varepsilon\to0$.
The key difficulty is due to that the energy associated to the reduced GPE in lower dimensions is unbounded
when $\varepsilon \to0$ \cite{BaoJakschP,Baocai2013}. In this paper, 
we study the strong interaction regime by adapting
a proper re-scaling. This amounts to considering simultaneously the strong confinement limit and the semi-classical limit for the solution $\psie$ to \fref{gpe} as $\eps\to 0$ and $\alpha\to 0$. Note that $\beta= \eps^d\alpha^{-n/2}$ may tend to every constants $\gamma\in \RR^+$ and even to $+\infty$.
	
	Our key mathematical assumption will be that the wavefunction $\psie$ at time $t=0$ is under the WKB form:
	\begin{equation}
	\label{initialWKB}
		\psie(0,x,z)= \Psi^\alpha_{\rm init}(x,z):=A_0(x,z)e^{iS_0(x)/\alpha},\quad \forall (x,z)\in \RR^{n+d}.
	\end{equation}
Here $A_0$ is a complex-valued function and $S_0$ is real-valued.
\begin{remark}
\label{rem1}
With respect to the small parameter $\alpha$, Eq. \eqref{gpe} is in a semiclassical regime which is usually referred to as "weakly nonlinear geometric optics", see \cite{bookRemi}. The more singular regime
\begin{equation}
\label{strong}
i\alpha \pa_t \Psi=\frac{\alpha}{\eps^2}\calH_z\Psi-\frac{\alpha^2}{2}\Delta_x \Psi+\frac{|x|^2}{2}\Psi+|\Psi|^2\Psi
\end{equation}
would correspond to the choice
$$\alpha=\left(\eps^d \beta^{-1}\right)^{\frac{2}{n+2}},$$
instead of the choice $\alpha=\left(\eps^d \beta^{-1}\right)^{\frac{2}{n}}$ that we have made in \eqref{alpha}. Hence, the difference between these two regimes lies in the assumption on the initial wavefunction: in the regime \eqref{gpe} studied here, the wavefunction is assumed to have a broader extension in the $x$ direction than in the more singular regime \eqref{strong}.
\end{remark}
%
\subsection{Heuristics} In the section, we derive formally the limiting behavior of the solution of \eqref{gpe}. We have the choice to first let $\eps\to 0$ (strong confinement limit), then $\alpha\to 0$ (semiclassical limit), or to exchange these two limits: first  $\alpha\to 0$, then  $\eps\to 0$. Our main result, stated in the next section, will be that in fact both limits commute together: the limit is valid as $\eps$ and $\alpha$ converge {\sl independently} to zero.

\subsubsection*{a) Strong confinement limit first, then semiclassical limit}
Following \cite{BenAbdallah2008154}, in order to analyze the strong partial confinement limit, it is convenient to begin by filtering out the fast oscillations at scale $\eps^2$ induced by the transveral Hamiltonian. To this aim, we introduce the new unknown
	\[
		\Phi^{\eps,\alpha}(t,\cdot)=e^{it\mathcal{H}_z/\eps^2}\psie(t,\cdot).
	\]
It satisfies the equation
	\bee
		i\alpha \pa_t \Phi^{\eps,\alpha}=-\frac{\alpha^2}{2}\Delta_x \Phi^{\eps,\alpha}+\frac{|x|^2}{2}\Phi^{\eps,\alpha}+\alpha F\left(\frac{t}{\eps^2},\Phi^{\eps,\alpha}\right)
	\eee
	where the nonlinear function is defined by
	\begin{equation}
	\label{def:F}
		\begin{array}{lll}
			 F(\theta,\Phi)=e^{i\theta\mathcal{H}_z}\left(\left|e^{-i\theta\mathcal{H}_z}\Phi\right|^2e^{-i\theta\mathcal{H}_z}\Phi\right).
		\end{array}
	\end{equation}
	A fundamental remark is that for all fixed $\Phi$, the function $\theta\mapsto F(\theta, \Phi)$ is $2\pi$-periodic, since the spectrum of $\mathcal{H}_z$ only contains integers.
	For any fixed $\alpha>0$, Ben Abdallah et al. \cite{BenAbdallah2008154,abdallah2011second} proved by an averaging argument that we have $\Phi^{\eps,\alpha}=\Phi^{0,\alpha}+\mathcal O(\eps^2)$, where $\Phi^{0,\alpha}$ solves the averaged equation
	\begin{equation}\label{firststep}
		i\alpha \pa_t \Phi^{0,\alpha}=-\frac{\alpha^2}{2}\Delta_x \Phi^{0,\alpha}+\frac{|x|^2}{2}\Phi^{0,\alpha}+\alpha F_{av}(\Phi^{0,\alpha}), \qquad \Phi^{0,\alpha}(t=0)=\Psi^\alpha_{\rm init}
	\end{equation}
where $F_{av}$ is the averaged vector field
	\begin{equation}
	\label{def:Fav}
			F_{av}(\Phi)=\frac{1}{2\pi}\int_0^{2\pi}F(\theta,\Phi)d\theta.
	\end{equation}
Now we can proceed to the second limit $\alpha\to 0$. As we said in Remark \ref{rem1}, \eqref{firststep} is written in the semi-classical regime of "weakly nonlinear geometric optics", which can be studied by a WKB analysis. Here we are only interested in the limiting model, so in the first stage of the WKB expansion. Let us introduce the solution $S(t,x)$ of the eikonal equation
\begin{equation}
		\pa_t S +\frac{|\nabla_x S|^2}{2}+\frac{|x|^2}{2} = 0,\qquad S(0,x)=S_0(x)\label{eq:S}
		\end{equation}
and, again, filter out the oscillatory phase of the wavefunction by setting
\[
	\beqo=e^{-iS(t,x)/\alpha}\,\Phi^{0,\alpha}.
\]
This function $\beqo(t,x,z)$ satisfies
	\begin{equation}\label{eq:b2}
					\pa_t \beqo +\nabla_x S \cdot \nabla_x \beqo  +\frac{1}{2} \beqo \Delta_x S=i\frac{\alpha}{2}\Delta_x \beqo -iF_{av}(\beqo),
\end{equation}	
with the initial data
$$\beqo(0,x,z)=A_0(x,z).$$
As long as the phase $S(t,x)$ remains smooth, i.e. before the formation of caustics in the eikonal equation, we expect to have $\beqo=\beqoo+\mathcal O(\alpha)$,  where $\beqoo(t,x,z)$ solves the limiting transport equation
			\begin{equation}\label{eq:b3}
								\pa_t \beqoo +\nabla_x S \cdot \nabla_x \beqoo  +\frac{1}{2}\beqoo \Delta_x S= -iF_{av}(\beqoo),\qquad \beqoo(0,x,z)=A_0(x,z).
			\end{equation}
To summarize, the solution $\Psi^{\eps,\alpha}$ of \eqref{gpe} is expected to behave as
\begin{equation}
\label{estimate}\Psi^{\eps,\alpha}(t,x,z)=e^{-it\mathcal{H}_z/\eps^2}e^{iS(t,x)/\alpha}\,A(t,x,z)+\mathcal O(\eps^2)+\mathcal O(\alpha).
\end{equation}

\subsubsection*{b) Semiclassical limit first, then strong confinement limit}
Coming back to the GPE \eqref{gpe}, let us first proceed to the semiclassical limit $\alpha\to 0$. We define
\begin{equation}
\label{beq}
	\beq =e^{it\mathcal{H}_z/\eps^2}e^{-iS(t,x)/\alpha}\,\Psi^{\eps,\alpha},
\end{equation}
where $S(t,x)$ is still the solution of the eikonal equation \eqref{eq:S}.
A direct computation shows that this function satisfies the equation
\begin{align}\label{eq:b1}
		\pa_t \beq +\nabla_x S \cdot \nabla_x \beq  +\frac{1}{2} \beq \Delta_x S&=i\frac{\alpha}{2}\Delta_x \beq-iF\left(\frac{t}{\eps^2},\beq\right),\\
	\nonumber\beq(0,x,z)&=A_0(x,z),
\end{align}
where $F$ is still defined by \eqref{def:F}. For all fixed $\eps$, we can expect that, as $\alpha\to 0$, we have $\beq=\beqs+\mathcal O(\alpha)$, where $\beqs$ solves the equation
\begin{align}\label{eq:b6}
		\pa_t \beqs +\nabla_x S \cdot \nabla_x \beqs  +\frac{1}{2} \beqs \Delta_x S&=-iF\left(\frac{t}{\eps^2},\beqs\right),\\
	\nonumber\beqs(0,x,z)&=A_0(x,z).
\end{align}
The last step consists in letting $\eps\to 0$ in this equation (strong confinement limit), which amounts to average out the oscillatory nonlinear term in \eqref{eq:b6}. This step yields the limiting equation \eqref{eq:b3}, and we have $\beqs=A+\mathcal O(\eps^2)$.
\begin{remark}		
	A key point here in this analysis is that the nonlinearities $F$ and $F_{av}$ are gauge invariant \textit{i.e.} for all $U\in L^2(\RR^{n+d})$ and for all $t$, we have
	\[
		F(t,Ue^{iS/\alpha}) = F(t,U)e^{iS/\alpha}, \qquad
		F_{av}(Ue^{iS/\alpha})= F_{av}(U)e^{iS/\alpha}.
	\]
\end{remark}

%
%
%
\subsection{Main results}
	Our main contribution is to prove rigorously the limit of the coupled averaging and semi-classical limits as $\eps\to 0$ and $\alpha\to 0$ independently and to prove the estimate \eqref{estimate}. It is natural --\,and equivalent as long as the phase $S(t,x)$ is well defined and is smooth\,-- to work with the function $\beq$ defined by \eqref{beq}.
\subsubsection{Existence, uniqueness and uniform boundedness results}
	Let us make precise our functional framework. For wavefunctions, we will use the scale of Sobolev spaces adapted to quantum harmonic oscillators:
\[
	B^m(\RR^{n+d}):=\{u\in H^m(\RR^{n+d})\;\mbox{such that}\,\left(|x|^m+|z|^m\right)u\in L^2(\RR^{n+d})\}
\]
for $m\in \NN$. For the phase $S$, we will use the space of subquadratic functions, defined by
\begin{equation*}
  {\tt SQ}_k(\RR^n) = \{f\in \mathcal C^k(\RR^n;\RR))\;\mbox{such that}\;\partial^\kappa_x f\in L^\infty(\RR^n), \,\mbox{for all}\; 2\leq |\kappa|\leq k\}.
\end{equation*}
where $k\in\NN$, $k\geq 2$.
In the following theorem, we give existence and uniqueness results for equations \eqref{eq:S}, \eqref{eq:b2}, \eqref{eq:b3}, \eqref{eq:b1} and \eqref{eq:b6} as well as uniform bounds on the solutions.
\begin{theorem}\label{theo:local1}
	Let $A_0\in B^m(\RR^{n+d})$ and $S_0\in {\tt SQ}_{s+1}(\RR^n)$, where $m>\frac{n+d}{2}+2$ and $s\geq m+2$.
 Then the following holds:
  \begin{enumerate}[(i)]
  \item There exists $T>0$ such that the eikonal equation \fref{eq:S} admits a unique solution
  $S\in \mathcal{C}([0,T];{\tt SQ}_s(\RR^n))\cap \mathcal C^s([0,T]\times \RR^n)$.
  \item There exists $\overline T\in(0,T]$ independent of $\eps$ and $\alpha$ such that the solutions $\beq$, $\beqo$, $\beqs$ and $\beqoo$ of, respectively, \eqref{eq:b1}, \eqref{eq:b2}, \eqref{eq:b6} and \eqref{eq:b3}, are uniquely defined in the space
  \[
  	C([0, \overline T];B^m(\RR^{n+d}))\cap C^1([0,\overline T];B^{m-2}(\RR^{n+d})).
\]
  \item  For all $(\eps,\alpha)\in (0,1]^2$, the functions $\beq, \beqo,\beqs, \beqoo$ are uniformly  bounded in
  	\[
		C([0,\overline T];B^m(\RR^{n+d}))\cap C^1([0,\overline T];B^{m-2}(\RR^{n+d})).
	\]
  \end{enumerate}
\end{theorem}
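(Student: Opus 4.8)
The plan is to establish the three items of Theorem~\ref{theo:local1} in turn.

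\emph{Item (i).} The Hamiltonian $H(x,\xi)=\tfrac12|\xi|^2+\tfrac12|x|^2$ of the eikonal equation \fref{eq:S} is (sub)quadratic with explicit flow: the characteristic issued from $x_0$ with momentum $\nabla S_0(x_0)$ is $X(t,x_0)=x_0\cos t+\nabla S_0(x_0)\sin t$, $\Xi(t,x_0)=-x_0\sin t+\nabla S_0(x_0)\cos t$. Since $S_0\in{\tt SQ}_{s+1}(\RR^n)$ the Hessian $\nabla^2S_0$ is bounded, so $\nabla_{x_0}X(t,\cdot)=\cos t\,\id+\sin t\,\nabla^2S_0$ is invertible with uniformly bounded inverse on some interval $[0,T]$; hence $x_0\mapsto X(t,x_0)$ is, for $t\in[0,T]$, a global $\mathcal C^\infty$ diffeomorphism of $\RR^n$, affinely bounded together with its inverse and with bounded derivatives of every order $\ge2$. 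Setting $S(t,X(t,x_0))=S_0(x_0)+\int_0^t\big(\tfrac12|\Xi(\tau,x_0)|^2-\tfrac12|X(\tau,x_0)|^2\big)\,d\tau$ defines the unique classical solution, whose regularity $S\in\mathcal C([0,T];{\tt SQ}_s(\RR^n))\cap\mathcal C^s([0,T]\times\RR^n)$ follows by differentiating the flow; this is the classical Hamilton--Jacobi theory for subquadratic Hamiltonians, see \cite{bookRemi}.

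\emph{Items (ii)--(iii).} I would handle the four equations simultaneously, first proving a priori estimates manifestly uniform in $(\eps,\alpha)$, then constructing the solutions. Writing $y=(x,z)$, each of \fref{eq:b1}, \fref{eq:b2}, \fref{eq:b6}, \fref{eq:b3} has the form $\pa_t u+\mathcal Lu=-iG(t,u)$, with $\mathcal L=\nabla_xS\cdot\nabla_x+\tfrac12\Delta_xS$ for \fref{eq:b6}, \fref{eq:b3} and $\mathcal L=\nabla_xS\cdot\nabla_x+\tfrac12\Delta_xS-i\tfrac\alpha2\Delta_x$ for \fref{eq:b1}, \fref{eq:b2}, and with $G(t,\cdot)$ equal to $F(t/\eps^2,\cdot)$ or $F_{av}(\cdot)$. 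Two structural facts drive the proof. First, $\mathcal L$ is skew-symmetric on $L^2(\RR^{n+d})$: integrating by parts, $\RE\<\nabla_xS\cdot\nabla_xu,u\>=-\tfrac12\<\Delta_xS,|u|^2\>$ cancels $\RE\<\tfrac12(\Delta_xS)u,u\>$ and $\RE\<-i\tfrac\alpha2\Delta_xu,u\>=0$; since moreover $\<F(\theta,u),u\>$ and $\<F_{av}(u),u\>$ are real (by $L^2$-unitarity of $e^{i\theta\calH_z}$), the mass $\|u(t)\|_{L^2}$ is conserved and the linear part contributes nothing to the energy identities. Second, $B^m(\RR^{n+d})$ is a Banach algebra (as $m>\tfrac{n+d}2$) and the transversal propagator $e^{i\theta\calH_z}$ acts on it with operator norm bounded uniformly in $\theta\in\RR$ --- this is the crucial point, and it holds precisely because $B^m$ is the Sobolev scale adapted to the harmonic oscillator, whose flow interchanges $z$ and $\nabla_z$; with the $2\pi$-periodicity in $\theta$ this yields $\|F(\theta,u)\|_{B^m}\le C\|u\|_{B^m}^3$, the same bound for $F_{av}$, and the matching Lipschitz estimates, all uniform in $\theta$ (hence in $\eps$).

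Granting these, I would run weighted energy estimates: applying $y^\kappa\pa_y^\mu$ with $|\kappa|+|\mu|\le m$ and pairing with $y^\kappa\pa_y^\mu u$ in $L^2$, the skew-symmetry annihilates the leading linear term; the commutator $[y^\kappa\pa_y^\mu,\mathcal L]u$ is bounded in $L^2$ by $C(\|S\|_{{\tt SQ}_s})\|u\|_{B^m}$ --- one uses here that $S$ is subquadratic (derivatives of $\nabla_xS$ of order between $2$ and $m+1\le s$, and of $\Delta_xS$ of order up to $m+2\le s$, are bounded) and that the linearly growing coefficients produced by commuting with the weights are reabsorbed by the weights already carried by $y^\kappa\pa_y^\mu u$, while the factor $\alpha\le1$ arising from $[\,\cdot\,,i\tfrac\alpha2\Delta_x]$ is harmless; the cubic term contributes $\lesssim\|u\|_{B^m}^4$ via the algebra property and the uniform $B^m$-bound on $e^{i\theta\calH_z}$. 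This gives $\tfrac{d}{dt}\|u\|_{B^m}^2\le C\big(1+\|u\|_{B^m}^2\big)\|u\|_{B^m}^2$ with $C$ depending only on $\|A_0\|_{B^m}$ and $\|S\|_{\mathcal C([0,T];{\tt SQ}_s)}$, so Gronwall produces a common time $\overline T\in(0,T]$ on which the four solutions remain bounded in $B^m$ independently of $(\eps,\alpha)$, which is (iii). Existence on $[0,\overline T]$ follows by a regularised scheme --- Friedrichs mollification of $G$, or spectral truncation on the Hermite eigenbasis of $-\Delta+|y|^2$ --- whose approximants obey the same estimate and pass to the limit; for the transport equations \fref{eq:b3}, \fref{eq:b6} one may alternatively use the Lagrangian change of variables $v(t,x,z)=u\big(t,X(t,x),z\big)\,\big(\det\nabla_xX(t,x)\big)^{1/2}$, which turns the equation into the $B^m$-valued ODE $\pa_tv=-i\,(\det\nabla_xX)^{-1}G(t,v)$, solved by Cauchy--Lipschitz in the algebra $B^m$. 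Uniqueness comes from an $L^2$ estimate on the difference $w$ of two solutions: $\RE\<\mathcal Lw,w\>=0$ and $\|G(t,u)-G(t,v)\|_{L^2}\le C\big(\|u\|_{B^m}^2+\|v\|_{B^m}^2\big)\|w\|_{L^2}$ (using $L^2$-unitarity of $e^{i\theta\calH_z}$ and $B^m\hookrightarrow L^\infty$), whence $w=0$ by Gronwall. Finally, belonging to --- and being uniformly bounded in --- $\mathcal C^1([0,\overline T];B^{m-2})$ is read off the equation, each term of $\mathcal Lu+iG(t,u)$ mapping $B^m$ into $B^{m-2}$.

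\emph{Main obstacle.} The delicate point is the uniformity in $\eps$. Since $F(t/\eps^2,\cdot)$ oscillates on the time scale $\eps^2$, the a priori bounds must be completely insensitive to this oscillation, which forces every estimate on the nonlinearity to pass through quantities on which $e^{i\theta\calH_z}$ is bounded \emph{uniformly} in $\theta$; this excludes working in $H^m$ or in a weighted $L^2$ separately and is exactly what singles out the spaces $B^m$. Carrying out the commutator estimates for the subquadratic transport operator $\mathcal L$ within this scale --- in particular controlling the linearly growing coefficients --- is the main technical work. Uniformity in $\alpha$ is, by contrast, cheap: the extra term $i\tfrac\alpha2\Delta_x$ is skew-adjoint and carries a small prefactor in every commutator.
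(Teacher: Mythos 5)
Your proposal is correct in its main lines and shares the paper's key structural ingredients for items (ii)--(iii) --- energy estimates in the scale $B^m$, exploiting (a) the skew-symmetry of $\nabla_xS\cdot\nabla_x+\tfrac12\Delta_xS-i\tfrac{\alpha}{2}\Delta_x$ on $L^2$, (b) the fact that $e^{i\theta\calH_z}$ is an isometry of $B^m$ (this is exactly the paper's Lemma~\ref{lem:tame}), and (c) commutator bounds that use only the subquadraticity of $S$. Item (i) is identical to the paper's argument (explicit harmonic-oscillator flow, invertibility of $\cos t\,I_n+\sin t\,\nabla^2S_0$ for small $t$, global inversion). Where you genuinely diverge is the \emph{construction} of solutions. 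The paper first proves well-posedness and the energy inequality for the non-homogeneous \emph{linear} equation (Lemma~\ref{lem:bound}), the crucial device being the gauge transformation $\Psi=ae^{iS/\alpha}$, which converts the variable-coefficient transport--Schr\"odinger operator into the constant-coefficient harmonic oscillator $-\tfrac{\alpha^2}{2}\Delta_x+\tfrac{|x|^2}{2}$, for which $B^m$ well-posedness is classical; the case $\alpha=0$ is then obtained by letting $\alpha_n\to0$ in this linear theory. The nonlinear problems are solved by Picard iteration on this linear solver. You instead propose to prove the nonlinear a priori estimate directly and to produce solutions by regularization (mollification or Hermite truncation), or, for the $\alpha=0$ transport equations, by the Lagrangian change of variables reducing to a $B^m$-valued ODE. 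The Lagrangian reduction for \eqref{eq:b3} and \eqref{eq:b6} is a nice, genuinely more elementary alternative (your half-density formula and the homogeneity $F(\theta,\lambda w)=\lambda^3F(\theta,w)$ do make the reduced equation an honest Cauchy--Lipschitz ODE in the algebra $B^m$).

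The one place where your route is materially under-specified is existence for the $\alpha>0$ equations \eqref{eq:b1}, \eqref{eq:b2}. Neither Friedrichs mollification of the nonlinearity nor Hermite truncation commutes with the operator $\nabla_xS\cdot\nabla_x-i\tfrac{\alpha}{2}\Delta_x$, so the uniform estimates on the approximants do not follow for free from your a priori computation: you must either control the extra commutators created by the regularizer or first establish the linear propagator for this variable-coefficient operator in $B^m$. The paper's gauge transformation is precisely the step that disposes of this difficulty, and some equivalent device (or a Kato-type quasilinear argument) is needed to close your construction. This is a repairable omission rather than a wrong turn, but as written the existence step for $\alpha>0$ does not yet stand on its own; everything else, including the uniformity in $(\eps,\alpha)$ of the bounds, the common time $\overline T$, the $L^2$ uniqueness argument, and the $\mathcal C^1([0,\overline T];B^{m-2})$ bound read off the equation, matches the paper's conclusions.
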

%
%
%
\subsubsection{Study of the limits $\alpha\to0$ and $\eps\to 0$.}
We are now able to study the behavior of $\beq$ as $\eps\to 0$ and $\alpha\to 0$.
	%
	%
	%
	\begin{theorem}\label{theo:limit}
		Assume the hypothesis of Theorem \pref{theo:local1} true. Then, for all $(\eps,\alpha)\in (0,1]^2$, we have the following bounds:
		\begin{enumerate}[(i)]
			\item \label{theol:pt1}Averaging results:
			\begin{equation}
			\label{error1}
				\|\beq-\beqo\|_{L^\infty([0,\overline T];B^{m-2}(\RR^{n+d}))}\leq C\eps^{2}
			\end{equation}
			and
			\begin{equation}
			\label{error2}
				\|\beqs-\beqoo\|_{L^\infty([0,\overline T];B^{m-2}(\RR^{n+d}))}\leq C\eps^2.
			\end{equation}
			\item Semi-classical result:
			\begin{equation}
			\label{error3}
				\|\beq-\beqs\|_{L^\infty([0,\overline T];B^{m-2}(\RR^{n+d}))}\leq C\alpha
			\end{equation}
			and
			\begin{equation}
			\label{error4}
				\|\beqo-\beqoo\|_{L^\infty([0,\overline T];B^{m-2}(\RR^{n+d}))}\leq C\alpha.
			\end{equation}
			\item Global result:
			\begin{equation}
			\label{error5}
\|\beq-\beqoo\|_{L^\infty([0,\overline T];B^{m-2}(\RR^{n+d}))}\leq C(\eps^2+\alpha).
			\end{equation}
		\end{enumerate}
		The constant $C$ here does not depend on $\alpha$ and $\eps$.
These estimates can be summarized in the following diagram:
	\[
\xymatrix{
    \beq \ar[rr]^{\calO(\eps^{2})} \ar[dd]_{\calO(\alpha)} \ar[rrdd]^{\calO(\alpha+\eps^2)} && \beqo \ar[dd]^{\calO(\alpha)}\\
    \\
    \beqs \ar[rr]_{\calO(\eps^2)} && \beqoo
  }
  	\]
\end{theorem}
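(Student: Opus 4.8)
The plan is to derive all five bounds from $B^{m-2}$-energy estimates for the transport-type difference equations, supplemented by a fast-time averaging (integration by parts in $\theta=t/\eps^2$) argument for the two $\calO(\eps^2)$-estimates, using throughout the uniform bounds of Theorem \pref{theo:local1}. Write $L(t):=-\na_x S(t,\cdot)\cdot\na_x-\tfrac12\Delta_x S(t,\cdot)$ and $L_\alpha(t):=L(t)+i\tfrac\alpha2\Delta_x$, so that \eqref{eq:b1}, \eqref{eq:b2}, \eqref{eq:b6}, \eqref{eq:b3} read $\pa_t\beq=L_\alpha\beq-iF(t/\eps^2,\beq)$, $\pa_t\beqo=L_\alpha\beqo-iF_{av}(\beqo)$, $\pa_t\beqs=L\beqs-iF(t/\eps^2,\beqs)$, $\pa_t\beqoo=L\beqoo-iF_{av}(\beqoo)$, all with datum $A_0$. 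Three facts will be used. (a) For $k>\tfrac{n+d}2$ the space $B^k(\RR^{n+d})$ is a Banach algebra and $e^{i\theta\calH_z}$ acts isometrically on it; since $F(\theta,\Phi)=e^{i\theta\calH_z}\big(|e^{-i\theta\calH_z}\Phi|^2e^{-i\theta\calH_z}\Phi\big)$ and $F_{av}$ is its $\theta$-average, it follows that $\|F(\theta,\Phi)\|_{B^k}+\|F_{av}(\Phi)\|_{B^k}\le C\|\Phi\|_{B^k}^3$ and that $F(\theta,\cdot),F_{av}$ are locally Lipschitz on $B^k$ uniformly in $\theta$; by $m>\tfrac{n+d}2+2$ this holds for $k\in\{m-2,m\}$. (b) Since $S\in\mathcal C([0,T];{\tt SQ}_s)$ with $s\ge m+2$, the operators $L,L_\alpha$ — whose principal parts are skew-adjoint — generate evolution families $U(t,s),U_\alpha(t,s)$ that are bounded on $B^{m-2}$ uniformly for $(t,s)\in[0,\overline T]^2$ and $\alpha\in(0,1]$, and $L,L_\alpha$ map $B^m$ into $B^{m-2}$ (the term $\na_x S\cdot\na_x$ costs one derivative and one power of the linear weight, $i\tfrac\alpha2\Delta_x$ costs two derivatives). (c) $\beq,\beqo,\beqs,\beqoo$ are uniformly bounded in $\mathcal C([0,\overline T];B^m)\cap\mathcal C^1([0,\overline T];B^{m-2})$, by Theorem \pref{theo:local1}.

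\emph{The $\calO(\alpha)$-estimates \eqref{error3}--\eqref{error4}.} The difference $w:=\beq-\beqs$ solves $\pa_t w=Lw+i\tfrac\alpha2\Delta_x\beq-i\big(F(t/\eps^2,\beq)-F(t/\eps^2,\beqs)\big)$ with $w(0)=0$, so the standard $B^{m-2}$-energy estimate for $L$ (using skew-adjointness of its principal part to absorb the derivative loss into commutators) gives $\tfrac{d}{dt}\|w\|_{B^{m-2}}\le C\|w\|_{B^{m-2}}+\tfrac\alpha2\|\Delta_x\beq\|_{B^{m-2}}+C\|F(t/\eps^2,\beq)-F(t/\eps^2,\beqs)\|_{B^{m-2}}$, and the right-hand side is $\le C\|w\|_{B^{m-2}}+C\alpha$ since $\|\Delta_x\beq\|_{B^{m-2}}\le\|\beq\|_{B^m}\le C$ by (c) and $F$ is uniformly Lipschitz by (a). Gronwall's lemma yields \eqref{error3}; the same computation with $(\beq,\beqs,F(t/\eps^2,\cdot))$ replaced by $(\beqo,\beqoo,F_{av})$ gives \eqref{error4}.

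\emph{The $\calO(\eps^2)$-estimates \eqref{error1}--\eqref{error2}.} Put $G(\theta,\Phi):=F(\theta,\Phi)-F_{av}(\Phi)$, which is smooth, $2\pi$-periodic and of zero mean in $\theta$; hence its $\theta$-primitive $\mathcal G(\theta,\Phi):=\int_0^\theta G(\sigma,\Phi)\,d\sigma$ is smooth, $2\pi$-periodic, satisfies $\pa_\theta\mathcal G=G$ and $\mathcal G(0,\cdot)=0$, and by (a) obeys $\|\mathcal G(\theta,\Phi)\|_{B^k}\le C\|\Phi\|_{B^k}^3$, $\|D_\Phi\mathcal G(\theta,\Phi)\cdot V\|_{B^k}\le C\|\Phi\|_{B^k}^2\|V\|_{B^k}$ for $k\in\{m-2,m\}$. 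For \eqref{error2}, $w:=\beqs-\beqoo$ solves $\pa_t w=Lw-i\big(F(t/\eps^2,\beqs)-F(t/\eps^2,\beqoo)\big)-iG(t/\eps^2,\beqoo)$, $w(0)=0$, so Duhamel gives $w(t)=-i\int_0^tU(t,s)\big(F(s/\eps^2,\beqs)-F(s/\eps^2,\beqoo)\big)ds-i\int_0^tU(t,s)G(s/\eps^2,\beqoo(s))\,ds$; the first integral is $\le C\int_0^t\|w(s)\|_{B^{m-2}}ds$ by (a)--(b). In the second, write $G(s/\eps^2,\beqoo(s))=\eps^2\tfrac{d}{ds}\big[\mathcal G(s/\eps^2,\beqoo(s))\big]-\eps^2D_\Phi\mathcal G(s/\eps^2,\beqoo(s))\cdot\pa_s\beqoo(s)$ and integrate the exact-derivative part by parts using $\pa_sU(t,s)=-U(t,s)L(s)$; since $\mathcal G(0,\cdot)=0$ the boundary term is just $\eps^2\mathcal G(t/\eps^2,\beqoo(t))$, of $B^{m-2}$-norm $\le C\eps^2$ by (c), and the remaining integrals — involving $L(s)\mathcal G(s/\eps^2,\beqoo(s))$ and $D_\Phi\mathcal G(s/\eps^2,\beqoo(s))\cdot\pa_s\beqoo(s)$ — are $\le C\eps^2$ because $\|L\mathcal G\|_{B^{m-2}}\le C\|\mathcal G\|_{B^m}\le C\|\beqoo\|_{B^m}^3\le C$ and $\pa_s\beqoo=L\beqoo-iF_{av}(\beqoo)$ is uniformly bounded in $B^{m-2}$, all by (b)--(c). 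Thus $\|w(t)\|_{B^{m-2}}\le C\eps^2+C\int_0^t\|w(s)\|_{B^{m-2}}ds$, and Gronwall yields \eqref{error2}. The proof of \eqref{error1} is identical after replacing $L,U(t,s)$ by $L_\alpha,U_\alpha(t,s)$ and $\beqoo,\beqs$ by $\beqo,\beq$; in particular $\|L_\alpha\mathcal G\|_{B^{m-2}}\le C\|\mathcal G\|_{B^m}\le C\|\beqo\|_{B^m}^3\le C$ and $\pa_s\beqo=L_\alpha\beqo-iF_{av}(\beqo)$ is uniformly bounded in $B^{m-2}$ — again by (c), which is where the full $B^m$-regularity is used and why the conclusion is expressed in $B^{m-2}$.

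\emph{The global estimate \eqref{error5} and the main difficulty.} Estimate \eqref{error5} follows from the triangle inequality $\|\beq-\beqoo\|\le\|\beq-\beqo\|+\|\beqo-\beqoo\|\le C(\eps^2+\alpha)$ in $L^\infty([0,\overline T];B^{m-2})$, via \eqref{error1} and \eqref{error4} (equivalently via \eqref{error3} and \eqref{error2}), which also shows the diagram commutes. I expect the only genuinely delicate point to be the fast-time averaging in \eqref{error1}--\eqref{error2}: one must ensure that the integration by parts in $\theta=t/\eps^2$ costs exactly two $x$-derivatives and \emph{no} power of $\alpha$, so that the error is uniformly $\calO(\eps^2)$ in $B^{m-2}$ rather than merely small in a weaker norm or of size $\calO(\alpha+\eps^2)$. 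This is precisely what the uniform $B^m$-bounds of Theorem \pref{theo:local1} (together with $m>\tfrac{n+d}2+2$, which keeps $B^{m-2}$ an algebra) are designed to provide; everything else reduces to transport-equation energy estimates and Gronwall's lemma.
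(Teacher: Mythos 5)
Your proposal is correct and follows essentially the same strategy as the paper: the $\calO(\alpha)$ bounds come from the Lipschitz estimates on $F$ and $F_{av}$ plus Gronwall, and the $\calO(\eps^2)$ bounds from introducing the $2\pi$-periodic primitive of $F(\theta,\cdot)-F_{av}(\cdot)$ and integrating by parts in the fast time, paying $\eps^2$ and controlling the remainders by the uniform $B^m$ and $\mathcal C^1([0,\overline T];B^{m-2})$ bounds of Theorem \ref{theo:local1}. The only (cosmetic) difference is that you package the integration by parts in a Duhamel formula with evolution families $U(t,s)$, $U_\alpha(t,s)$, whose uniform $B^{m-2}$-boundedness you assert as fact (b), whereas the paper performs direct energy estimates on the difference equation -- which is exactly how that boundedness is established (Lemma \ref{lem:bound}), so the two presentations are equivalent.
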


\bs
	Coming back to the original unknown, our theorem can be expressed in terms of Gross-Pitaevskii equations.
		\begin{corollary}\label{cor:schro}
		Under the assumptions  of Theorem \pref{theo:local1}, the solution $\Psi^{\eps,\alpha}$ of \eqref{gpe}, \eqref{initialWKB} satisfies
		\[
			\left\|e^{-iS/\alpha}\left(\psie -e^{-it\mathcal H_z/\eps^2}\Phi^{\alpha}\right)\right\|_{L^\infty([0,\overline T];B^{m-2}(\RR^{n+d}))}\leq C\eps^{2},
		\]
		 where $\Phi^\alpha$ solves the Gross-Pitaevskii equation in reduced dimension \eqref{firststep}. Moreover, we have
		\[
			\left\|e^{-iS/\alpha}\,\Phi^\alpha -\beqoo \right\|_{L^\infty([0,\overline T];B^{m-2}(\RR^{n+d}))}\leq C\alpha,
		\]
		where $\beqoo $ solves the transport equation \eqref{eq:b3}. The constant $C$ here is independent of $\eps\in (0,1]$ and $\alpha\in (0,1]$.
	\end{corollary}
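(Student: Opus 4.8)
The plan is to deduce the corollary directly from Theorem \pref{theo:limit} by unwinding the changes of unknown \eqref{beq} and $\beqo=e^{-iS/\alpha}\Phi^{0,\alpha}$, using two elementary facts: (a) multiplication by $e^{\pm iS(t,x)/\alpha}$ acts only on the $x$ variable whereas $e^{-it\mathcal{H}_z/\eps^2}$ acts only on the $z$ variable, so these two operators commute; (b) $e^{-it\mathcal{H}_z/\eps^2}$ is a unitary isometry on each space $B^m(\RR^{n+d})$, since it commutes with the quantum harmonic oscillators in $x$ and in $z$ whose powers define the $B^m$ norm.

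For the first estimate, recall that $\Phi^\alpha$ in the statement is the solution $\Phi^{0,\alpha}$ of \eqref{firststep}, that $\beq=e^{it\mathcal{H}_z/\eps^2}e^{-iS/\alpha}\psie$ by \eqref{beq}, and that $\beqo=e^{-iS/\alpha}\Phi^{0,\alpha}$; hence $\psie=e^{-it\mathcal{H}_z/\eps^2}e^{iS/\alpha}\beq$ and $\Phi^\alpha=e^{iS/\alpha}\beqo$. Substituting and using fact (a),
\[
e^{-iS/\alpha}\bigl(\psie-e^{-it\mathcal{H}_z/\eps^2}\Phi^\alpha\bigr)=e^{-iS/\alpha}e^{-it\mathcal{H}_z/\eps^2}e^{iS/\alpha}(\beq-\beqo)=e^{-it\mathcal{H}_z/\eps^2}(\beq-\beqo).
\]
By fact (b), the $L^\infty([0,\overline T];B^{m-2})$ norm of the right-hand side equals $\|\beq-\beqo\|_{L^\infty([0,\overline T];B^{m-2})}$, which is bounded by $C\eps^2$ by \eqref{error1}. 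For the second estimate, $e^{-iS/\alpha}\Phi^\alpha=e^{-iS/\alpha}\Phi^{0,\alpha}=\beqo$, so $e^{-iS/\alpha}\Phi^\alpha-\beqoo=\beqo-\beqoo$ and the bound $\le C\alpha$ is exactly \eqref{error4}. Both constants are independent of $\eps$ and $\alpha$ because the constant $C$ in Theorem \pref{theo:limit} is.

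There is no serious obstacle here; the corollary is a transcription. The only point deserving a line of justification is fact (b): it follows because $B^m$ coincides, with equivalent norm, with the domain of $(\mathrm{Id}-\tfrac12\Delta_x+\tfrac12|x|^2-\tfrac12\Delta_z+\tfrac12|z|^2)^{m/2}$, an operator that commutes with $\mathcal{H}_z$. I would also stress that one cannot drop the prefactor $e^{-iS/\alpha}$: multiplication by $e^{iS/\alpha}$ alone is \emph{not} bounded on $B^{m-2}$ uniformly in $\alpha$ (its derivatives are of size $1/\alpha$), so the statement is genuinely about the phase-filtered difference, and it is precisely the cancellation of the two phases in the computation above — made possible by fact (a) — that keeps all quantities bounded.
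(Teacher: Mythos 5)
Your proposal is correct and is exactly the argument the paper intends: the corollary is stated without a separate proof precisely because it is the transcription of estimates \eqref{error1} and \eqref{error4} through the changes of unknown \eqref{beq} and $\beqo=e^{-iS/\alpha}\Phi^{0,\alpha}$, using that multiplication by $e^{\pm iS/\alpha}$ commutes with $e^{-it\mathcal H_z/\eps^2}$ and that the latter is an isometry on $B^{m-2}$ (which in fact follows exactly, not just up to equivalence, from the paper's definition of the $B^m$ norm, since $e^{-it\mathcal H_z/\eps^2}$ commutes with $\partial_x^\kappa$, $|x|^m$ and $\Lambda_z^m$). Your closing remark that the phase prefactor $e^{-iS/\alpha}$ cannot be dropped is a correct and worthwhile observation.
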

	\begin{remark}
		For any fixed $\alpha>0$, the previous Corollary implies the result of Ben Abdallah et al. \cite{abdallah2011second,BenAbdallah2008154} with the same rate of convergence $\eps^2$.
	\end{remark}

\begin{remark} An interesting physical case corresponds to initial data polarized on the first eigenmode $\omega_0(z)$ of the confinement Hamiltonian $\mathcal{H}_z$. Assume that the Cauchy condition in \eqref{initialWKB} takes the form
\begin{equation}
	\label{initialWKBpolarized}
		\psie(0,x,z)=\Psi^\alpha_{\rm init}(x,z)=a_0(x)e^{iS_0(x)/\alpha}\omega_0(z).
\end{equation}
Then the solution $\Phi^\alpha$ of the Gross-Pitaevskii equation \eqref{firststep} in reduced dimension remains polarized on $\omega_0$: we have $$\Phi^\alpha(t,x,z)=\varphi^\alpha(t,x)\omega_0(z)$$ and $\varphi^\alpha(t,x)$ solves the equation
$$
i\alpha \pa_t \varphi^{\alpha}=-\frac{\alpha^2}{2}\Delta_x \varphi^{\alpha}+\frac{|x|^2}{2}\varphi^\alpha+\frac{\alpha}{(2\pi)^{d/2}}  |\varphi^{\alpha}|^2\varphi^{\alpha},\qquad \varphi^{\alpha}(0,x)=a_0(x)e^{iS_0(x)/\alpha}.
$$
Moreover, the solution $\beqoo$ of the limiting transport equation \eqref{eq:b3} takes the form $$A(t,x,z)=a(t,x)\omega_0(z),$$ where $a(t,x)$ solves the equation
$$
\pa_t a+\nabla_xS\cdot\nabla_xa+\frac{1}{2}a\Delta_x S=-\frac{i}{(2\pi)^{d/2}} |a|^2a,\qquad a(0,x)=a_0(x).
$$
	\end{remark}
	
	\bs
	The sequel of this article is devoted to the proofs of our two theorems. In Section \ref{sec2}, we prove Theorem \ref{theo:local1} and in Section \ref{sec3}, we prove Theorem \ref{theo:limit}.

	\section{Proof of Theorem \ref{theo:local1}: well-posedness and uniform estimates}
	
This section is devoted to the proof of Theorem \ref{theo:local1}. We first prove the local in time well-posedness of the eikonal equation (Proposition \ref{lem:eik}). Then we prove the local in time well-posedness of the four equations  \eqref{eq:b1}, \eqref{eq:b2}, \eqref{eq:b6} and \eqref{eq:b3}, as well as uniform bounds (Proposition \ref{lem:existbound}). Theorem  \ref{theo:local1} is then a direct consequence of these two Propositions \ref{lem:eik} and \ref{lem:existbound}.

\subsection{Solving the eikonal equation}
\label{subsection:eikonal}
We seek a solution of equation \eqref{eq:S}, where $S_0\in {\tt SQ}_{s+1}(\RR^n)$, for $s\geq 2$.
\begin{example}
	If $S_0=0$, the function defined by
	\[
		S:(t,x)\in (-\pi/2,\pi/2)\times \RR^n\mapsto-\frac{1}{2}|x|^2\tan t
	\]
	is the regular solution of equation \eqref{eq:S}. Let us remark that $S$ is not globally defined in time.
\end{example}
Following \cite{bookRemi}, we use the method of characteristics to find a regular solution to \eqref{eq:S}.
The characteristic equations associated with this Hamilton-Jacobi equation are
\[\left\{\begin{array}{rcll}
\ds	\pa_tx(t,y)&=&\ds \xi(t,y),&\quad x(0,y)=y,\\[2mm]
\ds	\pa_t\xi(t,y)&=&\ds -x(t,y),&\quad \xi(0,y) = \nabla_x S_0(y),\\[2mm]
\ds	\pa_t z(t,y)&=&\ds \frac{|\xi(t,y)|^2}{2}-\frac{|x(t,y)|^2}{2},&\quad z(0,y)= S_0(y)
\end{array}\right.
\]
(see for instance \cite[Section $3.2.5$]{evans1998partial}).

 The two first lines form a closed system of equations which  are called \textit{Hamilton's equations}. The solution is unique, belongs to $\mathcal{C}^{s}(\RR\times \RR^n;\RR^{2n})$ and is given by
\[
	\left(
		\begin{array}{c}
			x(t,y)\\
			\xi(t,y)
		\end{array}
	\right)
	=
	\left(
		\begin{array}{c}
			y\cos(t)+\nabla_x S_0(y)\sin(t)\\
			-y\sin(t)+\nabla_xS_0(y)\cos(t)
		\end{array}
	\right).
\]
Let us define the Jacobian determinant $J\in\mathcal{C}^{s-1}(\RR\times \RR^n;\RR)$ by
\[
	J_t(y) = \det \nabla_yx(t,y)=\det({I}_n\cos(t)+\nabla^2_{xx}S_0(y)\sin(t))
\]
where $I_n$ is the identity matrix of $\RR^n$.
Since $S_0$ is subquadratic, there exists $T>0$ and $C>0$ such that
\[
	\frac{1}{C}<J_t(y)<C,\quad |\pa^\kappa_yx(t,y)|\leq C\mbox{ for all } t\in[0,T],  \; y\in \RR^n \mbox{ and } |\kappa|=1,2.
\]
 By Schwartz's global inversion theorem \cite{schwartz1969nonlinear, derezinski1997scattering}, $y\mapsto x(t,y)$ is a $\mathcal{C}^s$-diffeomorphism of $\RR^n$. Let us denote by $y(t,\cdot)$ its inverse function so that $S$ is defined for all $t\in[0,T]$ and $x\in \RR^n$ by
 \[
 	S(t,x) = z(t,y(t,x)).
 \]
We obtain then the following Proposition, see \cite{evans1998partial} for details:
\begin{proposition}\label{lem:eik}
	If $S_0\in {\tt SQ}_{s+1}(\RR^n)$ with $s\geq 2$, there exists $T>0$ such that the eikonal equation \eqref{eq:S} admits a unique solution $S\in \mathcal C([0,T];{\tt SQ}_s(\RR^n))\cap \mathcal{C}^{s}([0,T]\times\RR^n)$.
\end{proposition}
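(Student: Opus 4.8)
The plan is to complete the method-of-characteristics argument already initiated above. The explicit formula for Hamilton's flow shows $(x,\xi)\in\mathcal C^s(\RR\times\RR^n;\RR^{2n})$, and since $S_0\in{\tt SQ}_{s+1}(\RR^n)$ the Hessian $\nabla^2_{xx}S_0$ is bounded on $\RR^n$; hence $J_t(y)=\det(I_n\cos t+\nabla^2_{xx}S_0(y)\sin t)\to 1$ uniformly in $y$ as $t\to 0$, which produces a time $T>0$ (depending only on $\|\nabla^2_{xx}S_0\|_{L^\infty}$) with $1/C<J_t(y)<C$ and $|\partial^\kappa_y x(t,y)|\le C$ for $|\kappa|\le 2$ on $[0,T]$, exactly as recorded above. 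For each $t\in[0,T]$, Schwartz's global inversion theorem then gives that $y\mapsto x(t,y)$ is a $\mathcal C^s$-diffeomorphism of $\RR^n$ with inverse $y(t,\cdot)$ of class $\mathcal C^s$, and one sets $S(t,x):=z(t,y(t,x))$.

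Next I would check that this $S$ solves \eqref{eq:S}. The pivotal identity is $\nabla_xS(t,x(t,y))=\xi(t,y)$. To prove it, differentiate $z(t,y)=S(t,x(t,y))$ in $y$ to get $\nabla_yz(t,y)=(\nabla_yx(t,y))^{T}\nabla_xS(t,x(t,y))$; on the other hand a direct computation from the characteristic ODEs shows $\partial_t\bigl(\nabla_yz-(\nabla_yx)^{T}\xi\bigr)=0$, and this quantity vanishes at $t=0$ since $x(0,y)=y$ and $\xi(0,y)=\nabla_xS_0(y)=\nabla_yz(0,y)$; hence $\nabla_yz=(\nabla_yx)^{T}\xi$ for all $t$, and invertibility of $\nabla_yx$ yields the identity. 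Differentiating $z(t,y)=S(t,x(t,y))$ in $t$ and using $\partial_tx=\xi$, $\partial_tz=\tfrac12|\xi|^2-\tfrac12|x|^2$ together with the identity then gives $\partial_tS+\tfrac12|\nabla_xS|^2+\tfrac12|x|^2=0$ at every point $(t,x(t,y))$, and since $y\mapsto x(t,y)$ is onto this holds on all of $[0,T]\times\RR^n$; the initial condition $S(0,\cdot)=z(0,\cdot)=S_0$ is immediate.

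For the regularity, $S\in\mathcal C^s([0,T]\times\RR^n)$ follows from the chain rule applied to $z(t,y(t,x))$ (time derivatives being traded for space derivatives via the equation itself), since both $z$ and $y(\cdot,\cdot)$ are $\mathcal C^s$. To obtain $S\in\mathcal C([0,T];{\tt SQ}_s(\RR^n))$ I need $\partial^\kappa_xS\in L^\infty([0,T]\times\RR^n)$ for $2\le|\kappa|\le s$: starting from $\nabla_xS(t,x)=\xi(t,y(t,x))$ and differentiating, every such derivative is a universal polynomial in the quantities $\partial^\mu_y\xi$ — bounded because $\xi$ is affine in $y$ with $L^\infty$ coefficients — and the derivatives $\partial^\mu_y y(t,x)$ of the inverse map of order $\le s$, all of which are uniformly bounded on $[0,T]\times\RR^n$ after possibly shrinking $T$; continuity in $t$ follows from continuity of the flow and of its inverse. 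Uniqueness is standard: if $\widetilde S\in\mathcal C([0,T'];{\tt SQ}_s)\cap\mathcal C^s$ also solves \eqref{eq:S}, then along the curves solving $\dot x=\nabla_x\widetilde S(t,x)$, $x(0)=y$, one shows by differentiating $\nabla_x\widetilde S$ and $\widetilde S$ along the flow that these curves, together with $\nabla_x\widetilde S$ and $\widetilde S$ evaluated on them, satisfy the same characteristic ODEs with the same data, hence coincide with $(x(t,y),\xi(t,y),z(t,y))$ on the common time interval, forcing $\widetilde S=S$ there.

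The main obstacle I anticipate is not a single conceptual difficulty but the bookkeeping behind the subquadratic bounds: tracking the one-derivative loss at the inversion step (which is why the hypothesis is $S_0\in{\tt SQ}_{s+1}$, not ${\tt SQ}_s$) and showing that all relevant derivatives of $y(t,\cdot)$ and of $S$ stay bounded on a time interval depending only on $S_0$. This is the classical short-time eikonal analysis (cf. \cite{evans1998partial} and the discussion in \cite{bookRemi}), and I would structure the estimates so that everything reduces to the two already-noted facts $\|\nabla^2_{xx}S_0\|_{L^\infty}<\infty$ and the two-sided bound on $J_t$.
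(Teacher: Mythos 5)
Your argument is correct and follows essentially the same route as the paper: explicit Hamilton flow, small-time two-sided bound on $J_t$ coming from the boundedness of $\nabla^2_{xx}S_0$, Schwartz's global inversion theorem, and $S(t,x)=z(t,y(t,x))$; you merely fill in the standard verifications (the identity $\nabla_xS(t,x(t,y))=\xi(t,y)$, the subquadratic bounds with the one-derivative loss at inversion, uniqueness via the characteristics) that the paper delegates to \cite{evans1998partial}. One cosmetic point: $\xi(t,\cdot)$ is not affine in $y$ unless $\nabla_x S_0$ is, but the fact you actually use --- that $\partial^\mu_y\xi$ is bounded for $1\le|\mu|\le s$ because it only involves derivatives of $S_0$ of order between $2$ and $s+1$ --- is correct.
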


\subsection{Well-posedness results and uniform estimates} \label{sec2}

Let us introduce the non-negative essentially self-adjoint operator $\Lambda_z^m:=(1+\mathcal{H}_z)^{m/2}$ on $L^2(\RR^{n+d})$ whose domain is $\mathcal{D}(\Lambda_z^m) =  L^2(\RR^n)\otimes B^m_z(\RR^d)$, where
\bee
	 B^m_z(\RR^d):= \{u\in H^m(\RR^d):\quad |z|^{m}u\in L^2(\RR^d)\}.
\eee
The space $B^m(\RR^{n+d})$ is endowed with the norm
\[
	\|u\|_{B^m}^2 := \sum_{|\kappa|\leq m}\|\partial_x^\kappa u\|_{L^2}^2 +\||x|^m u\|_{L^2}^2 +\|\Lambda_z^m u\|_{L^2}^2.
\]
We will use the $L^2$ real scalar product defined by
$$(u,v)=\RE\int_{\RR^{n+d}}\overline{u(x,z)}v(x,z)dxdz$$
and we shall denote
$$(u,v)_{B^m}=\sum_{|\kappa|\leq m}\left(\pa^\kappa_x u,\pa^\kappa_x v\right)+\left(|x|^{m}u,|x|^{m}v\right)+\left(\Lambda^m_z u,\Lambda^m_zv\right).$$

\begin{remark}
	Theorem VIII.33 of \cite{reed1980methods} ensures that if $\Lambda$ is an essentially self-adjoint operator on the Hilbert space $H_1$ of domain $\mathcal{D}(\Lambda)$ and $H_2$ is another Hilbert space, then $\Lambda\otimes I$ is essentially self-adjoint on $H_1\otimes H_2$ with domain $\mathcal{D}(\Lambda\otimes I)= \mathcal{D}(\Lambda)\otimes H_2$; here $I$ is the identity of $H_2$.
\end{remark}
\begin{remark}
Let us stress that showing that the domain of the self-adjoint operator $\mathcal{H}_z^{m/2}$ of $L^2(\RR^d)$ is $B^m(\RR^d)$ is a delicate point. It can be shown \cite{helffer1984theorie} that the following norms $N_1^m$, and $N_2^m$ defined for $u\in\mathcal{C}^\infty_0(\RR^d)$ by
\bee
	N_1^m(u) &=& \|\Lambda^m_zu\|_{L^2(\RR^d)},\\
	N_2^m(u) &=& \|u\|_{H^m(\RR^d)}+\||z|^{m}u\|_{L^2(\RR^d)}
\eee
are equivalent. In the sequel, we will also make frequent use of the estimate
\[
	\||x|^k\partial_z^{\kappa}u\|_{L^2}\leq CN^m_1(u),\quad \mbox{for all }u\in \mathcal{C}^\infty_0(\RR^d) \mbox{ and }k+|\kappa|\leq m.
\]
Ben Abdallah et al. generalized these results for a more general class of confining potential using Weyl-H\"ormander calculus in \cite{BenAbdallah2008154}.
\end{remark}
As an immediate consequence, we get the following \textit{tame estimate}  for $m\geq0$. Let $G:\CC\rightarrow\CC$ be a smooth function such that $G(0)=0$, then for all $u\in B^m(\RR^{n+d})\cap L^\infty(\RR^{n+d})$, we have $G(u)\in B^m(\RR^{n+d})\cap L^\infty(\RR^{n+d})$ and
\be
\label{eq:tamez}
	\|\Lambda^m_zG(u)\|_{L^2}\leq C\left(\|u\|_{L^\infty}\right)\|\Lambda^m_z u\|_{L^2}
\ee
where $C:[0,\infty[\rightarrow [0,\infty[$ (see \cite[Proposition 2.5]{BenAbdallah2008154}, \cite[Lemma 4.10.2]{cazenave2003semilinear} or \cite[Lemma 1.24]{bookRemi}).

\begin{remark}\label{rem:algebraBm}
	Assuming that $m>\frac{n+d}{2}$, we get that
	\[
		B^m(\RR^{n+d})\hookrightarrow H^m(\RR^{n+d})\hookrightarrow L^\infty(\RR^{n+d}),
	\]
	 and $B^m(\RR^{n+d})$ is an algebra.
\end{remark}
The proof of uniform well-posedness for the four equations  \eqref{eq:b1}, \eqref{eq:b2}, \eqref{eq:b6} and \eqref{eq:b3} will be based on the following lemma concerning a non-homogeneous linear equation (\ref{eq:d}) with
a given source term $R$.
\begin{lemma}\label{lem:bound}
	Let us assume that for some $m>2$, $s\geq m+2$ and $T>0$, we have
	\begin{enumerate}[(i)]
		\item $a_0\in B^{m}(\RR^{n+d})$,
		\item $S\in \mathcal C([0,T];{\tt SQ}_s(\RR^n))\cap \mathcal{C}^{s}([0,T]\times\RR^n)$ solves the eikonal equation \eqref{eq:S},
		\item $R\in \mathcal{C}([0,T];B^{m}(\RR^{n+d}))$.
	\end{enumerate}
	Then, for all $\alpha\in [0,1]$, there exists a unique solution $a\in \mathcal{C}([0,T]; B^{m}(\RR^{n+d}))\cap  \mathcal{C}^1([0,T]; B^{m-2}(\RR^{n+d}))$ to the following equation:
	\begin{equation}\label{eq:d}
					\pa_t a +\nabla_x S \cdot \nabla_x a  +\frac{a}{2} \Delta_x S=i\frac{\alpha}{2}\Delta_x a +R,\quad			 a(0,x,z)=a_0(x,z).
	\end{equation}	
	Moreover for all $t\in[0,T]$, $a$ satisfies the estimates
	\begin{align}\label{eq:estd}
		 \left\|a(t)\right\|_{B^m}^2&\leq\left\|a_0\right\|_{B^m}^2+C\int_0^t\left\|a(s)\right\|_{B^m}^2ds+\int_0^t\left(a(s),R(s)\right)_{B^m}ds
		 \\\label{eq:estd2}&\leq\left\|a_0\right\|_{B^m}^2+C\int_0^t\left(\left\|a(s)\right\|_{B^m}^2+\left\|R(s)\right\|_{B^m}^2\right)ds
	\end{align}
	where $C$ is a generic constant which depends only on $m$ and on $$\underset{2\leq|\kappa|\leq s}{\sup} \|\pa^\kappa_x S\|_{L^\infty([0,T]\times \RR^n)}.$$
\end{lemma}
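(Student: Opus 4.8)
\emph{Strategy.} Equation \eqref{eq:d} is linear, so the whole content is a $B^m$ energy estimate together with a standard approximation argument for existence. I would write \eqref{eq:d} as $\pa_t a=\mathcal{L}a+R$ with $\mathcal{L}:=-\nabla_x S\cdot\nabla_x-\tfrac12\Delta_x S+i\tfrac\alpha2\Delta_x$, and start from the structural remark that $\mathcal{L}$ is skew-symmetric on $L^2(\RR^{n+d})$: integrating by parts the transport term gives $(-\nabla_x S\cdot\nabla_x a-\tfrac12 a\Delta_x S,a)=\tfrac12\int\Delta_x S|a|^2-\tfrac12\int\Delta_x S|a|^2=0$, while $(i\tfrac\alpha2\Delta_x a,a)=\RE\big(i\tfrac\alpha2\int|\nabla_x a|^2\big)=0$. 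This already yields $\tfrac{d}{dt}\|a\|_{L^2}^2=2(a,R)$ with a constant not seeing $\alpha$; the point of the lemma is that this $\alpha$-independence survives in $B^m$.

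\emph{A priori estimate in $B^m$.} I would differentiate $\|a(t)\|_{B^m}^2=\sum_{|\kappa|\le m}\|\pa_x^\kappa a\|_{L^2}^2+\||x|^m a\|_{L^2}^2+\|\Lambda_z^m a\|_{L^2}^2$ piece by piece. The transversal part is easiest: since $\nabla_x S$, $\Delta_x S$ and $\Delta_x$ only involve $x$ and $\nabla_x$, the operator $\Lambda_z^m=(1+\mathcal{H}_z)^{m/2}$ commutes with all of them, so $\Lambda_z^m a$ solves \eqref{eq:d} with source $\Lambda_z^m R$ and the skew-symmetry computation gives $\tfrac{d}{dt}\|\Lambda_z^m a\|_{L^2}^2=2(\Lambda_z^m a,\Lambda_z^m R)$. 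For $\pa_x^\kappa$ with $|\kappa|\le m$ I would apply the Leibniz rule: the diagonal part, where every derivative hits $a$, reproduces $\mathcal{L}$ acting on $\pa_x^\kappa a$ and contributes $0$ to the pairing with $\pa_x^\kappa a$; each remaining commutator term is a product $(\pa_x^{\kappa_1}\nabla_x S)\cdot\nabla_x\pa_x^{\kappa_2}a$ or $(\pa_x^{\kappa_1}\Delta_x S)\,\pa_x^{\kappa_2}a$ with $|\kappa_1|\ge1$, so the $S$-factor carries at least two derivatives and is bounded by $\sup_{2\le|\kappa|\le s}\|\pa_x^\kappa S\|_{L^\infty}$, while the $a$-factor carries at most $m$ derivatives, whence each term is $\le C\|a\|_{B^m}^2$. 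The weight $|x|^m$ is handled the same way, commuting $|x|^m$ past $\mathcal{L}$: the transport term recombines into the skew form up to a remainder $m|x|^{m-2}(x\cdot\nabla_x S)\,a$, which grows at most quadratically because $\nabla_x S$ grows at most linearly (subquadraticity of $S$) and is absorbed into $\||x|^m a\|_{L^2}^2+\|a\|_{L^2}^2$, while the dispersive term gives, after one integration by parts, a lower-order contribution bounded by $m\alpha\,\||x|^{m-1}\nabla_x a\|_{L^2}\||x|^m a\|_{L^2}\le C\|a\|_{B^m}^2$, uniformly in $\alpha\in[0,1]$, using the mixed estimate $\||x|^{m-1}\nabla_x a\|_{L^2}\le C\|a\|_{B^m}$. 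Summing, $\tfrac{d}{dt}\|a(t)\|_{B^m}^2\le C\|a(t)\|_{B^m}^2+2(a(t),R(t))_{B^m}$, with $C$ depending only on $m$ and on $\sup_{2\le|\kappa|\le s}\|\pa_x^\kappa S\|_{L^\infty}$; integrating in time gives \eqref{eq:estd}, and $2(a,R)_{B^m}\le\|a\|_{B^m}^2+\|R\|_{B^m}^2$ gives \eqref{eq:estd2}. The real work, and the main obstacle, is exactly this bookkeeping: one must see that the unbounded coefficient $\nabla_x S$ only ever appears inside the skew combination $\nabla_x S\cdot\nabla_x+\tfrac12\Delta_x S$ (which integrates to zero against any differentiated or weighted copy of $a$) or differentiated at least twice (hence bounded), and that the $\alpha$-dependence stays out of $C$ because the constant-coefficient operator $i\tfrac\alpha2\Delta_x$ is exactly skew-adjoint, commutes with $\pa_x^\kappa$ and $\Lambda_z^m$, and contributes only an $O(\alpha)\le O(1)$ term to the $|x|^m$-identity.

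\emph{Existence, uniqueness, regularity.} Uniqueness follows from \eqref{eq:estd}: the difference of two solutions solves \eqref{eq:d} with $R\equiv0$ and zero initial data, lies in $\mathcal{C}([0,T];B^m)\cap\mathcal{C}^1([0,T];B^{m-2})$ so the computation above is licit, and Gr\"onwall forces it to vanish. For existence I would run a standard regularization for which the a priori estimate is uniform, e.g. parabolic regularization $i\tfrac\alpha2\Delta_x\rightsquigarrow(i\tfrac\alpha2+\delta)\Delta_x$ with $\delta\downarrow0$: each regularized problem is globally well-posed (linear equation with extra dissipation, by semigroup theory), the term $\delta\Delta_x$ adds only a non-positive, respectively lower-order, quantity to each energy identity, so the bounds of the previous paragraph hold with constants independent of $\delta$, and one passes to the limit by weak-$*$ compactness in $L^\infty([0,T];B^m)$ plus strong convergence in a weaker norm (enough to identify the terms of \eqref{eq:d}), obtaining a solution $a\in L^\infty([0,T];B^m)$ satisfying \eqref{eq:d} in $\mathcal{C}([0,T];B^{m-2})$. (Friedrichs mollification of the transport term, or a Galerkin truncation, works equally well.) Reading $\pa_t a$ off the equation gives $\pa_t a\in\mathcal{C}([0,T];B^{m-2})$, and strong continuity $a\in\mathcal{C}([0,T];B^m)$ then follows from weak continuity together with continuity of $t\mapsto\|a(t)\|_{B^m}$, itself a consequence of the energy \emph{identity} underlying \eqref{eq:estd}.
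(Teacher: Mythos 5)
Your a priori estimate is essentially the paper's: you differentiate the same three pieces of the $B^m$ norm, kill the top-order terms using the skew combination $\nabla_x S\cdot\nabla_x+\tfrac12\Delta_x S$ together with the exact skew-adjointness of $i\tfrac{\alpha}{2}\Delta_x$, and control the remainders by the same commutator bookkeeping (every surviving coefficient carries at least two derivatives of $S$, hence is bounded by $\sup_{2\le|\kappa|\le s}\|\pa_x^\kappa S\|_{L^\infty}$, and $[\,|x|^m,\Delta_x]$ is of lower order), so the constant is independent of $\alpha$. Where you genuinely diverge is the construction of the solution. The paper uses hypothesis \emph{(ii)} that $S$ solves the eikonal equation: for $\alpha>0$ it sets $\Psi=a\,e^{iS/\alpha}$, observes that \eqref{eq:d} is then exactly the non-homogeneous linear Schr\"odinger equation $i\alpha\pa_t\Psi=-\tfrac{\alpha^2}{2}\Delta_x\Psi+\tfrac{|x|^2}{2}\Psi+i\alpha R\,e^{iS/\alpha}$, and quotes existence in $B^m$ via Duhamel; the degenerate case $\alpha=0$ is recovered afterwards by letting $\alpha_n\to0$ in these solutions, after regularizing $a_0$, $S$, $R$ to $B^{m+2}$-level data so that the Cauchy-sequence estimate closes in $B^m$. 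You instead build the solution for every $\alpha\in[0,1]$ at once by parabolic (or Galerkin/Friedrichs) regularization, the uniform energy bound, and weak-$*$ compactness. Your route is more self-contained and slightly more general --- it never uses that $S$ solves \eqref{eq:S}, only its subquadraticity, and it puts $\alpha=0$ on the same footing as $\alpha>0$ --- but it shifts the technical burden onto you: global solvability of the regularized problem is not a direct consequence of bounded-perturbation semigroup theory because the drift $\nabla_x S\cdot\nabla_x$ has linearly growing coefficients (a Galerkin truncation or a cutoff of the coefficients fixes this), and recovering strong $\mathcal C([0,T];B^m)$ continuity from weak continuity requires two-sided energy control (run the reversible equation backwards for left-continuity), since \eqref{eq:estd} is an inequality rather than an identity. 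The paper's route outsources the hard analytic step to standard results on the harmonic-oscillator Schr\"odinger equation, at the price of the extra gauge-transform and double limiting layers; both arguments are sound.
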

\begin{proof}
We first prove the result for $0<\alpha\leq 1$ and treat the case $\alpha=0$ in a second step. Let us start with a few preliminary remarks. From assumption {\em (ii)}, we deduce that $|\nabla_x S(t,x)|\leq C(1+|x|)$ and that, for all $2\leq k\leq m$, we have the equivalences
$$u\in \mathcal{C}([0,T]; B^{k}(\RR^{n+d}))\Longleftrightarrow ue^{iS/\alpha}\in \mathcal{C}([0,T]; B^{k}(\RR^{n+d}))
$$
and
$$u\in \mathcal{C}^1([0,T]; B^{k-2}(\RR^{n+d}))\Longleftrightarrow  ue^{iS/\alpha}\in\mathcal{C}^1([0,T]; B^{k-2}(\RR^{n+d})).
$$
Moreover, a direct calculation using the fact that $S$ solves the eikonal equation shows that a function $a\in \mathcal{C}([0,T]; B^{2}(\RR^{n+d}))\cap  \mathcal{C}^1([0,T]; L^2(\RR^{n+d}))$ is a strong solution of \eqref{eq:d} if, and only if $\Psi=ae^{iS/\alpha}$ is a strong solution of the non-homogeneous linear GPE
	\begin{equation}
	\label{GPElinear}
		i\alpha \pa_t \Psi=-\frac{\alpha^2}{2}\Delta_x \Psi+\frac{|x|^2}{2}\Psi+i\alpha R\,e^{iS/\alpha},\quad	 \Psi(0,x,z)=a_0(x,z)e^{iS_0(x)/\alpha}.
	\end{equation}
	Note that $R\,e^{iS/\alpha}\in \mathcal{C}([0,T]; B^{m}(\RR^{n+d}))$ and that $a_0e^{iS_0/\alpha}\in B^m$. Therefore, standard results on Schr\"odinger equations \cite{cazenave2003semilinear} give the existence and uniqueness of the strong solution $\Psi\in \mathcal{C}([0,T]; B^{m}(\RR^{n+d}))\cap  \mathcal{C}^1([0,T]; B^{m-2}(\RR^{n+d}))$ to \eqref{GPElinear}. This solution can be expressed in terms of the Duhamel formula
	$$\Psi(t)=e^{-it\mathcal H_x}\left(a_0e^{iS_0/\alpha}\right)+\int_0^te^{-i(t-\tau)\mathcal H_x}\left(R(\tau)e^{iS(\tau)/\alpha}\right)d\tau,$$
	where $\mathcal H_x=-\frac{\alpha^2}{2}\Delta_x+\frac{|x|^2}{2}$. This proves the well-posedness of \eqref{eq:d} for $0<\alpha \leq 1$.
	
	\bs
	
	Let us now prove the estimate \eqref{eq:estd}. Applying $\pa^\kappa_x$ to equation \eqref{eq:d}, where $|\kappa|\leq m$, yields
	$$
		\pa_t \pa^\kappa_x a + \nabla_xS\cdot\nabla_x \pa^\kappa_xÊa-[\nabla_xS\cdot\nabla_x,\pa^\kappa_x]a+\pa^\kappa_x\left(\frac{a}{2}\Delta_x S\right)=i\frac{\alpha}{2}\Delta_x \pa^\kappa_xa+\pa^\kappa_xR.
	$$
Take the $L^2$ real scalar product of this equation with $\pa^\kappa_x a$. Since $i\Delta_x$ is skew-symmetric, we get  that
	$$
		 \frac{1}{2}\frac{d}{dt}\|\pa^\kappa_xa\|_{L^2}^2+\left(\pa^\kappa_xa,\nabla_xS\cdot\nabla_x\pa^\kappa_xa\right)=\left(\pa^\kappa_x a, R_1+R_2+\pa^\kappa_xR\right),
	$$
	where $R_1 = [\nabla_x S\cdot \nabla_x,\pa^\kappa_x]a$ and $R_2=- \pa^\kappa_x\left(\frac{a}{2}\Delta_x S\right)$. We have by an integration by part that
	\[
		\left|\left(\pa^\kappa_xa,\nabla_xS\cdot\nabla_x\pa^\kappa_x a\right)\right| =
		 \frac{1}{2}\left|\left(\Delta_xS\,\pa^\kappa_xa,\pa^\kappa_xa\right)\right|.
	\]
	 We recall that $S\in \mathcal C([0,T],{\tt SQ}_s)$ with $s\geq m+2$, hence all the derivatives of $\Delta_x S$ up to the order $m$ are bounded, so that
	 \[
		\left|\left(\pa^\kappa_xa,\nabla_xS\cdot\nabla_x\pa^\kappa_x a\right)\right|
		\leq C\|a\|^2_{B^m}
	\]
	and
	\[
		\|R_2\|_{L^2}\leq C\|a\|_{B^m}.
	\]
	Let us now remark that the commutator $[\nabla_x S\cdot \nabla_x,\pa^\kappa_x]a$ is only composed of differential operators of order $\leq m$ multiplied by $L^\infty$ functions, since $S$ is subquadratic.
	Hence, we get
	$$
		\|R_1\|_{L^2}\leq C\|a\|_{B^m}.
	$$
	It comes finally
	\begin{equation}
	\label{esti1}
		\frac{1}{2}\frac{d}{dt}\|\pa^\kappa_xa\|_{L^2}^2\leq C\|a\|_{B^m}^2+\left(\pa^\kappa_x a, \pa^\kappa_xR\right).
	\end{equation}
	
	Applying now the operator $\Lambda^m_z$ to \eqref{eq:d} yields (recall that $S$ does not depend on $z$)
	$$
		\pa_t \Lambda^m_z a + \nabla_xS\cdot\nabla_x \Lambda^m_zÊa+\Lambda^m_z\left(\frac{a}{2}\Delta_x S\right)=i\frac{\alpha}{2}\Delta_x \Lambda^m_za+\Lambda^m_zR.
	$$
	Hence, taking the $L^2$ real scalar product with $\Lambda^m_za $ gives, after integrations by parts,
	\begin{align}\label{esti2}
		\frac{1}{2}\frac{d}{dt}\|\Lambda^m_za\|_{L^2}^2&=\left(\Lambda^m_z a,\pa^\kappa_xR\right).
			\end{align}

	Let us finally apply the operator $|x|^m$ to \eqref{eq:d}. We get that
	\begin{align}\nonumber
		\pa_t \left(|x|^ma\right)+\nabla_xS\cdot\nabla_x(|x|^ma)+\frac{|x|^ma}{2}\Delta_x S=&[\nabla_xS\cdot \nabla_x,|x|^m]a+\frac{i\alpha}{2}\Delta_x\left(|x|^ma\right)\\
	&+\frac{i\alpha}{2}[|x|^m,\Delta_x]a+|x|^mR.\label{eqxm}
	\end{align}
	Since $S$ is subquadratic, we have
	\[
		\|[\nabla_xS\cdot \nabla_x,|x|^m]a\|_{L^2}=m\| |x|^{m-2}x\cdot \nabla_xS\,a\|_{L^2}\leq C\|\left(1+|x|^m\right)a\|_{L^2}\leq C\|a\|_{B^m}
	\]
	and we compute also
	\begin{align*}
		\|[|x|^m,\Delta_x]a\|_{L^2}&=\|\Delta_x(|x|^m)a+2\nabla_x(|x|^m)\cdot \nabla_x a\|_{L^2}\\
		&\leq C\||x|^{m-2}a\|_{L^2}+C\||x|^{m-1}\na_x a\|_{L^2}\leq C\|a\|_{B^m}.
	\end{align*}
	Taking the $L^2$ real scalar product of \eqref{eqxm} with $|x|^ma$, we get
	\begin{equation}
	\label{esti3}
		\frac{1}{2}\frac{d}{dt}\||x|^ma\|_{L^2}^2\leq C \|a\|_{B^m}^2+\left(|x|^m a,|x|^mR\right).
		\end{equation}
Finally, from \eqref{esti1}, \eqref{esti2} and \eqref{esti3}, we deduce \eqref{eq:estd} for $0<\alpha\leq 1$. From \eqref{eq:estd} and Cauchy-Schwarz, we obtain then the second estimate \eqref{eq:estd2}. Note that the above calculations are rigorous only if we know a better regularity for $a$, for instance $a\in \mathcal{C}([0,T]; B^{m+2}(\RR^{n+d}))\cap  \mathcal{C}^1([0,T]; B^{m}(\RR^{n+d}))$. A standard regularization argument, that we skip here, enables to fully justify this proof.


\bs
Let us now prove the result in the case $\alpha=0$. To this aim, we consider a regularized sequence $a_0^\delta$, $S^\delta$, $R^\delta$, where $\delta>0$ is a regularization parameter, such that
\begin{enumerate}[(i)']
		\item $a_0^\delta\in B^{m+2}(\RR^{n+d})$ and $\|a_0^\delta-a_0\|_{B^m}\to 0$ as $\delta\to 0$,
		\item $S^\delta\in \mathcal C([0,T^\delta];{\tt SQ}_{s+2}(\RR^n))\cap \mathcal{C}^{s+2}([0,T^\delta]\times\RR^n)$ solves the eikonal equation \eqref{eq:S}, $\|S^\delta-S\|_{\mathcal C^s}\to 0$ as $\delta\to 0$ and $T^\delta\to T$,
		\item $R^\delta\in \mathcal{C}([0,T];B^{m+2}(\RR^{n+d}))$ and $\|R^\delta-R\|_{L^\infty([0,T];B^m)}\to 0$ as $\delta\to 0$.
	\end{enumerate}
	Note that, to construct $S^\delta$, we need to regularize the associated initial data $S_0$, which may make the existence time $T$ depend on $\delta$.

We consider a sequence  $(\alpha_n)_{n\in \NN}$ of positive numbers converging to $0$ and denote by $(a_n^\delta)_{n\in \NN}$ the sequence of solutions of
$$
					\pa_t a_n^\delta +\nabla_x S^\delta \cdot \nabla_x a_n^\delta  +\frac{a_n^\delta}{2} \Delta_x S^\delta=i\frac{\alpha_n}{2}\Delta_x a_n^\delta +R^\delta,\quad			a_n^\delta(0,x,z)=a^\delta_0(x,z).
$$
	In a first step, we consider $\delta>0$ as fixed. From \eqref{eq:estd2} and Gronwall's lemma, we infer
$$
		\max_{0\leq t\leq T^\delta}\left\|a^\delta_n(t)\right\|_{B^{m+2}}^2\leq\left(\left\|a^\delta_0\right\|_{B^{m+2}}^2+CT^\delta\|R^\delta\|^2_{L^\infty([0,T^\delta];B^{m+2})}\right)e^{CT^\delta},
$$
so this sequence $(a_n^\delta)_{n\in \NN}$ is uniformly bounded in $\mathcal C([0,T^\delta];B^{m+2}(\RR^{n+d}))$.
	Moreover, we have
	\begin{equation}
	\label{eq:estdp}
		\pa_t (a_p^\delta-a_q^\delta) +\nabla_x S^\delta \cdot \nabla_x  (a_p^\delta-a_q^\delta)  +\frac{ (a_p^\delta-a_q^\delta)}{2} \Delta_x S^\delta=i\frac{\alpha_q}{2}\Delta_x (a_p^\delta-a_q^\delta)+ i\frac{\alpha_p-\alpha_q}{2}\Delta_x a_p^\delta.
	\end{equation}
	Applying again \eqref{eq:estd2} with $R$ replaced by $i\frac{\alpha_p-\alpha_q}{2}\Delta a_p^\delta\in \mathcal{C}([0,T^\delta];B^m(\RR^{n+d})) $ and $a_0=0$ gives
	\[
		\|a_p^\delta(t)-a_q^\delta(t)\|_{B^{m}}^2\leq C_\delta(\alpha_p-\alpha_q)^2+C_\delta\int_0^t\left(\|a_p^\delta(s)-a_q^\delta(s)\|_{B^{m}}^2\right)ds
	\]
	and Gronwall's lemma implies that
	\[
		\max_{0\leq t\leq T^\delta}\|a_p^\delta(t)-a_q^\delta(t)\|_{B^{m}}^2\leq C_\delta(\alpha_p-\alpha_q)^2.
	\]
	Hence, $(a_n^\delta)_{n\in \NN}$ is a Cauchy sequence of $\mathcal{C}([0,T^\delta];B^m(\RR^{n+d}))$. Inserting this information in \eqref{eq:estdp} yields that it is also a Cauchy sequence of $\mathcal{C}^1([0,T^\delta];B^{m-2}(\RR^{n+d}))$. Therefore, as $n\to +\infty$, this sequence converges  to a function $$a^\delta \in \mathcal{C}([0,T^\delta];B^m(\RR^{n+d}))\cap \mathcal{C}^1([0,T^\delta];B^{m-2}(\RR^{n+d}))$$ which solves
	\begin{equation}\label{eq:deps}
					\pa_t a^\delta +\nabla_x S^\delta \cdot \nabla_x a^\delta  +\frac{a^\delta}{2} \Delta_x S^\delta= R^\delta,\quad			a^\delta(0,x,z)=a_0^\delta(x,z).
	\end{equation}	
	
	Let us now proceed to the limit $\delta\to 0$. Using \eqref{eq:estd2} for \eqref{eq:deps} (remark that the above proof of this estimate is valid also for $\alpha=0$) enables to show that $a^\delta$ is a Cauchy sequence in $\mathcal{C}([0,T];B^m(\RR^{n+d}))\cap \mathcal{C}^1([0,T];B^{m-2}(\RR^{n+d}))$ and converges to a function $a$ which satisfies \eqref{eq:d} with $\alpha=0$. The estimates \eqref{eq:estd} and \eqref{eq:estd2} are also valid for this function $a$. Remark that the uniqueness of the solution $a$ also stems from the estimate \eqref{eq:estd2} (written for the difference between two solutions) and Gronwall's lemma.
\end{proof}
In order to prove the uniform well-posedness of the four nonlinear equations \eqref{eq:b1}, \eqref{eq:b2}, \eqref{eq:b3} and \eqref{eq:b6}, we will need the following Lipschitz estimates for $g:u\mapsto |u|^2u$,  $F(\theta,\cdot)$ defined by \eqref{def:F} and $F_{av}$ defined by \eqref{def:Fav}.
\begin{lemma}\label{lem:tame}
For all $m>\frac{n+d}{2}$ and  $M>0$, there is a nondecreasing function $M\mapsto C_m(M)>0$ such that
\bee
	\|g(u)-g(v)\|_{B^m}&\leq C_m(M)\|u-v\|_{B^m}\\
	\|F_{av}(u)-F_{av}(v)\|_{B^m}&\leq C_m(M)\|u-v\|_{B^m}\\
	\left\|F\left(\theta,u\right)-F\left(\theta,v\right)\right\|_{B^m}&\leq C_m(M)\|u-v\|_{B^m},
\eee
for all $u,v\in B^m(\RR^{n+d})$ satisfying $\|u\|_{B^m}\leq M$, $\|v\|_{B^m}\leq M$ and for all $\theta\in \RR$.
\end{lemma}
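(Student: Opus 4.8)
The plan is to prove the three Lipschitz estimates by reducing everything to a single master estimate for the cubic nonlinearity $g(u)=|u|^2u$, using the algebra and tame-estimate structure already available in $B^m$. The key structural facts I would invoke are: (a) Remark \ref{rem:algebraBm}, which says that for $m>\frac{n+d}{2}$ the space $B^m(\RR^{n+d})$ is an algebra with $B^m\hookrightarrow L^\infty$; and (b) the tame estimate \eqref{eq:tamez} together with the companion bounds $\|\partial_x^\kappa G(u)\|_{L^2}\le C(\|u\|_{L^\infty})\|u\|_{B^m}$ and $\||x|^m G(u)\|_{L^2}\le C(\|u\|_{L^\infty})\||x|^m u\|_{L^2}$, which combine to give $\|G(u)\|_{B^m}\le C(\|u\|_{L^\infty})\|u\|_{B^m}$ for any smooth $G$ with $G(0)=0$.

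First I would treat $g(u)-g(v)$. Write the telescoping identity $|u|^2u-|v|^2v=|u|^2(u-v)+v\,\overline{(u-v)}\,u+|v|^2\,(\overline v\,v - \overline v\,v) \cdots$ — more cleanly, $g(u)-g(v)=(u-v)\,\tfrac12(|u|^2+|v|^2)+\tfrac12(u+v)(\overline u u+\overline v v - \cdots)$; in practice one just uses $g(u)-g(v) = (u-v)P(u,v)$ where $P$ is a polynomial in $u,\overline u,v,\overline v$ of degree two. Since $B^m$ is an algebra, the product of the factor $u-v$ with the degree-two factor $P(u,v)$ is estimated by $\|u-v\|_{B^m}$ times $\|P(u,v)\|_{B^m}$, and $\|P(u,v)\|_{B^m}\le C(\|u\|_{B^m}+\|v\|_{B^m})^2\le C\,M^2$ again by the algebra property. (One must be slightly careful that complex conjugation does not affect $B^m$-norms, which is clear from the definition.) This yields the first inequality with $C_m(M)=C\,M^2$, hence nondecreasing in $M$.

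Next, the estimate for $F_{av}$ follows from that for $F(\theta,\cdot)$ by the triangle inequality applied under the integral in \eqref{def:Fav}, so it suffices to handle $F(\theta,\cdot)$ uniformly in $\theta$. Recall $F(\theta,\Phi)=e^{i\theta\mathcal H_z}g\!\left(e^{-i\theta\mathcal H_z}\Phi\right)$. The crucial point is that $e^{\pm i\theta\mathcal H_z}$ is unitary on $L^2$ and, more importantly, commutes with $\Lambda_z^m=(1+\mathcal H_z)^{m/2}$ and acts only in the $z$-variable, so it is an isometry of $B^m$: indeed $\|e^{i\theta\mathcal H_z}u\|_{B^m}=\|u\|_{B^m}$ because $e^{i\theta\mathcal H_z}$ commutes with $\partial_x^\kappa$, with multiplication by $|x|^m$, and with $\Lambda_z^m$, and is $L^2$-unitary. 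Therefore
\[
\|F(\theta,u)-F(\theta,v)\|_{B^m}=\big\|g\!\left(e^{-i\theta\mathcal H_z}u\right)-g\!\left(e^{-i\theta\mathcal H_z}v\right)\big\|_{B^m}\le C_m(M)\,\big\|e^{-i\theta\mathcal H_z}(u-v)\big\|_{B^m}=C_m(M)\|u-v\|_{B^m},
\]
using the first inequality together with $\|e^{-i\theta\mathcal H_z}u\|_{B^m}=\|u\|_{B^m}\le M$. This gives the third inequality, and integrating over $\theta\in[0,2\pi]$ gives the second with the same constant.

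The main obstacle is the $B^m$-boundedness (indeed $B^m$-isometry) of the propagator $e^{i\theta\mathcal H_z}$: one needs that it genuinely preserves the $z$-regularity and $z$-weight simultaneously, which is exactly the delicate identification of the domain of $\mathcal H_z^{m/2}$ recalled in the remark after Lemma \ref{lem:bound} (the equivalence $N_1^m\sim N_2^m$ and the mixed bound $\||x|^k\partial_z^\kappa u\|_{L^2}\le C N_1^m(u)$). Once one phrases the $B^m$-norm using $\Lambda_z^m$ rather than $H^m_z$-plus-weight — which the stated norm $\|u\|_{B^m}^2=\sum_{|\kappa|\le m}\|\partial_x^\kappa u\|_{L^2}^2+\||x|^m u\|_{L^2}^2+\|\Lambda_z^m u\|_{L^2}^2$ already does — the commutation $[e^{i\theta\mathcal H_z},\Lambda_z^m]=0$ is immediate from the spectral theorem, and the obstacle dissolves; the rest is the routine algebra/tame bookkeeping sketched above. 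A standard density argument ($\mathcal C_0^\infty$ dense in $B^m$) justifies all manipulations.
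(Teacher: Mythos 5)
Your proposal is correct and follows essentially the same route as the paper: the estimates for $F(\theta,\cdot)$ and $F_{av}$ are obtained exactly as in the text, by combining the Lipschitz bound for $g$ with the fact that $e^{\pm i\theta\mathcal H_z}$ is an isometry of $B^m$ (it commutes with $\partial_x^\kappa$, $|x|^m$ and $\Lambda_z^m$) and then averaging in $\theta$. The only difference is that you prove the first inequality directly from the algebra property of $B^m$ via the decomposition $g(u)-g(v)=u^2\overline{(u-v)}+\bar v(u+v)(u-v)$, whereas the paper simply cites it from the references; your self-contained argument is valid.
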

\begin{proof}
	The first inequality is already given in \cite{BenAbdallah2008154,helffer1984theorie}. Using that $e^{-i\theta \mathcal{H}_z}$ is an isometry on $B^m$, we get that
	\begin{align*}
		\left\|F\left(\theta,u\right)-F\left(\theta,v\right)\right\|_{B^m}&= \left\|g\left(e^{-i\theta \mathcal{H}_z}u\right)-g\left(e^{-i\theta \mathcal{H}_z}v\right)\right\|_{B^m}\\
		&\leq C_m(M)\left\|e^{-i\theta \mathcal{H}_z}u-e^{-i\theta \mathcal{H}_z}v\right\|_{B^m}\\
		&\leq C_m(M)\left\|u-v\right\|_{B^m}
	\end{align*}
	and
	\begin{align*}
		\left\|F_{av}\left(u\right)-F_{av}\left(v\right)\right\|_{B^m}&= \left\|\frac{1}{2\pi}\int_0^{2\pi}\left(F\left(\theta,u\right)-F\left(\theta,v\right)\right)d\theta\right\|_{B^m}\\
		&\leq \frac{1}{2\pi}\int_0^{2\pi}\left\|F\left(\theta,u\right)-F\left(\theta,v\right)\right\|_{B^m}d\theta\\
		&\leq C_m(M)\left\|u-v\right\|_{B^m}.
	\end{align*}
\end{proof}
The main result of this section is the following Proposition.
\begin{proposition}\label{lem:existbound}
	Let $(\eps,\alpha)\in (0,1]^2$ and $M>0$. Let $m>\frac{n+d}{2}$ and $s\geq m+2$. Let $S_0\in {\tt SQ}_{s+1}(\RR^n)$ and $S$ be the corresponding solution of the eikonal equation, given by Proposition \pref{lem:eik}. Then there exist $\overline T\in (0,T]$ which depends only on $M$ and
	\[
		\underset{2\leq|\kappa|\leq s}{\sup} \|\pa^\kappa_x S\|_{L^\infty([0,T]\times \RR^n)}
	\]
	such that, for all $A_0\in B^m(\RR^{n+d})$ satisfying $\|A_0\|_{B^m(\RR^{n+d})}\leq M$,
	\begin{enumerate}[(i)]
		\item\label{lemexistence:pt1} there is a unique solution $\beq\in \mathcal{C}([0,\overline T];B^m(\RR^{n+d}))\cap \mathcal{C}^1([0,\overline T];B^{m-2}(\RR^{n+d}))$ to Eq. \eqref{eq:b1},
		\item  there is a unique solution $\beqo\in \mathcal{C}([0,\overline T];B^m(\RR^{n+d}))\cap \mathcal{C}^1([0,\overline T];B^{m-2}(\RR^{n+d}))$ to Eq. \eqref{eq:b2},
		\item  there is a unique solution $\beqs\in \mathcal{C}([0,\overline T];B^m(\RR^{n+d}))\cap \mathcal{C}^1([0,\overline T];B^{m-2}(\RR^{n+d}))$ to Eq. \eqref{eq:b6},
		\item  there is a unique solution $\beqoo\in \mathcal{C}([0,\overline T];B^m(\RR^{n+d}))\cap \mathcal{C}^1([0,\overline T];B^{m-2}(\RR^{n+d}))$ to Eq. \eqref{eq:b3}.
	\end{enumerate}
	Moreover, we have
		\[
			\|\beq\|_{L^\infty([0,\overline T];B^m)},\;\|\beqo\|_{L^\infty([0,\overline T];B^m)},\;\|\beqs\|_{L^\infty([0,\overline T];B^m)},\;\|\beqoo\|_{L^\infty([0,\overline T];B^m)}\leq 2M,
		\]
		and the $L^\infty([0,\overline T];B^{m-2})$ norms of
		$\pa_t\beq$, $\pa_t\beqo$, $\pa_t\beqs$ and  $\pa_t\beqoo$ are uniformly bounded with respect to $(\eps,\alpha)$.
\end{proposition}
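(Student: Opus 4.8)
The four equations \eqref{eq:b1}, \eqref{eq:b2}, \eqref{eq:b6} and \eqref{eq:b3} are all of the form of the linear equation \eqref{eq:d} of Lemma \ref{lem:bound}, with $\alpha\in(0,1]$ (for \eqref{eq:b1}, \eqref{eq:b2}) or $\alpha=0$ (for \eqref{eq:b6}, \eqref{eq:b3}), and with the source term $R$ replaced by a nonlinear map of the unknown $a$, namely $-iF(t/\eps^2,a)$ or $-iF_{av}(a)$. Since the arguments are entirely parallel, I would only treat \eqref{eq:b1}, writing $a=\beq$. The plan is to run a Picard iteration in $\mathcal C([0,\overline T];B^m(\RR^{n+d}))$ built on the linear estimate of Lemma \ref{lem:bound} and the uniform Lipschitz bounds of Lemma \ref{lem:tame}, with $\overline T$ chosen small enough (depending only on $M$, $m$ and $\sup_{2\le|\kappa|\le s}\|\pa^\kappa_x S\|_{L^\infty([0,T]\times\RR^n)}$) that all iterates stay in the ball of radius $2M$.

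Concretely, set $a^{(0)}\equiv A_0$ and, given $a^{(k)}\in\mathcal C([0,T];B^m)$ with $\|a^{(k)}\|_{L^\infty([0,T];B^m)}\le 2M$, let $a^{(k+1)}$ be the solution of \eqref{eq:d} with source $R^{(k)}:=-iF(t/\eps^2,a^{(k)})$ given by Lemma \ref{lem:bound}. To apply that lemma one must first check $R^{(k)}\in\mathcal C([0,T];B^m)$: the inclusion $F(\theta,u)\in B^m$ for $u\in B^m$ follows from the tame estimate \eqref{eq:tamez} (or from Remark \ref{rem:algebraBm} together with the fact that $e^{\pm i\theta\mathcal H_z}$ is an isometry of $B^m$), while continuity in $t$ follows from Lemma \ref{lem:tame}, continuity of $t\mapsto a^{(k)}(t)$, and the strong continuity of the group $(e^{i\theta\mathcal H_z})_\theta$ on $B^m$ (valid because $\Lambda_z^m$ commutes with it). Since $F(\theta,0)=0$, Lemma \ref{lem:tame} gives $\|R^{(k)}(s)\|_{B^m}\le C_m(2M)\|a^{(k)}(s)\|_{B^m}\le 2M\,C_m(2M)$, uniformly in $\eps$, $\alpha$ and $s$. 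Plugging this into estimate \eqref{eq:estd2} and applying Gronwall's lemma yields, for $t\le\overline T$,
\[
\|a^{(k+1)}(t)\|_{B^m}^2\le\Big(\|A_0\|_{B^m}^2+C\,\big(2M\,C_m(2M)\big)^2\,\overline T\Big)e^{C\overline T},
\]
where $C$ is the constant of Lemma \ref{lem:bound}, depending only on $m$ and $\sup_{2\le|\kappa|\le s}\|\pa^\kappa_x S\|_{L^\infty}$. Choosing $\overline T$ small enough makes the right-hand side at most $4M^2$, which closes the induction and gives $\|a^{(k)}\|_{L^\infty([0,\overline T];B^m)}\le 2M$ for every $k$, with $\overline T$ independent of $\eps$ and $\alpha$.

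Next, the difference $a^{(k+1)}-a^{(k)}$ solves \eqref{eq:d} with zero initial datum and source $-i\big(F(t/\eps^2,a^{(k)})-F(t/\eps^2,a^{(k-1)})\big)$, whose $B^m$ norm is $\le C_m(2M)\|a^{(k)}-a^{(k-1)}\|_{B^m}$ by Lemma \ref{lem:tame}; hence \eqref{eq:estd2} and Gronwall give $\|(a^{(k+1)}-a^{(k)})(t)\|_{B^m}^2\le K\int_0^t\|(a^{(k)}-a^{(k-1)})(s)\|_{B^m}^2\,ds$ for a constant $K$ uniform in $\eps,\alpha$, so that $\sup_{[0,\overline T]}\|a^{(k+1)}-a^{(k)}\|_{B^m}^2\le \frac{(K\overline T)^k}{k!}\sup_{[0,\overline T]}\|a^{(1)}-a^{(0)}\|_{B^m}^2$ after iterating the time integral. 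This is summable, so $a^{(k)}\to a$ in $\mathcal C([0,\overline T];B^m)$ with $\|a\|_{L^\infty([0,\overline T];B^m)}\le 2M$. Passing to the limit in the equation (using Lemma \ref{lem:tame} for the nonlinear term, and the boundedness of $\nabla_xS\cdot\nabla_x(\cdot)$, of multiplication by $\Delta_xS$, and of $\Delta_x$ from $B^m$ into $B^{m-2}$, as in the proof of Lemma \ref{lem:bound}) shows $a\in\mathcal C^1([0,\overline T];B^{m-2})$ and that $a$ solves \eqref{eq:b1}. Uniqueness follows from \eqref{eq:estd2} applied to the difference of two solutions together with Gronwall. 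Finally, reading off $\pa_t a$ from \eqref{eq:b1} and estimating $\|\nabla_xS\cdot\nabla_x a\|_{B^{m-2}}+\tfrac12\|a\Delta_xS\|_{B^{m-2}}+\tfrac{\alpha}{2}\|\Delta_x a\|_{B^{m-2}}+\|F(t/\eps^2,a)\|_{B^{m-2}}\le C(M)$ (using $\alpha\le 1$ and $\|\cdot\|_{B^{m-2}}\le\|\cdot\|_{B^m}$) gives the uniform bound on $\pa_t a$ in $L^\infty([0,\overline T];B^{m-2})$. The same scheme, with $-iF(t/\eps^2,\cdot)$ replaced by $-iF_{av}(\cdot)$ and/or with $\alpha=0$ — all covered by Lemmas \ref{lem:bound} and \ref{lem:tame} — disposes of \eqref{eq:b2}, \eqref{eq:b6} and \eqref{eq:b3}.

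The only real difficulty is bookkeeping: one must check that every constant entering the construction — the energy constant $C$ of Lemma \ref{lem:bound}, the Lipschitz constant $C_m(2M)$ of Lemma \ref{lem:tame}, and hence $\overline T$ — is independent of both $\eps$ and $\alpha$; this is precisely where the $\theta$-uniformity in Lemma \ref{lem:tame} is used. The one genuine (if minor) functional-analytic point is the strong continuity of $(e^{i\theta\mathcal H_z})_\theta$ on $B^m$, needed so that the nonlinear source terms $R^{(k)}$ genuinely belong to $\mathcal C([0,T];B^m)$ at each step of the iteration.
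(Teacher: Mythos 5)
Your proposal is correct and follows essentially the same route as the paper: a Picard iteration built on the linear energy estimate of Lemma \ref{lem:bound} and the uniform Lipschitz bounds of Lemma \ref{lem:tame}, with $\overline T$ chosen so that all iterates stay in the ball of radius $2M$, followed by a contraction argument, passage to the limit, and the uniform bound on $\pa_t a$ read off from the equation. The only (harmless) differences are that you close the contraction via the iterated-integral factorial bound where the paper uses a geometric factor $2^{-k}$ from the smallness of $\overline T$, and that you make explicit the strong continuity of $e^{i\theta\mathcal H_z}$ on $B^m$ needed to check $R^{(k)}\in\mathcal C([0,T];B^m)$, a point the paper leaves implicit.
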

\begin{proof}
	This Proposition can be proved by iterative schemes. Let us only write the proof of Item {\em (\ref{lemexistence:pt1})}, the other items can be proved similarly. We denote by $a^0$ the function defined for all $t\in [0,T]$ by $a^0(t)=A_0$. Then, for all $k\in \NN$, $a^{k+1}$ is defined as the solution of the following equation
	\begin{equation*}
					\pa_t a^{k+1} +\nabla_x S \cdot \nabla_x a^{k+1} +\frac{a^{k+1}}{2} \Delta_x S=i\frac{\alpha}{2}\Delta_x a^{k+1} -iF\left(\frac{t}{\eps^2},a^{k}\right),
	\end{equation*}	
	satisfying
	\[
		a^{k+1}(0,x,z)=A_0(x,z).
	\]
	From Lemmas \ref{lem:bound} and \ref{lem:tame}, we deduce that the sequence $(a^k)_{k\in \NN}$ is well-defined in $ \mathcal{C}([0,T]; B^{m}(\RR^{n+d}))\cap  \mathcal{C}^1([0,T]; B^{m-2}(\RR^{n+d}))$ and that
	\begin{align*}
		\left\|a^{k+1}(t)\right\|_{B^m}^2
		&\leq \left\|A_0\right\|_{B^m}^2+C_0\int_0^t\left(\left\|a^{k+1}(s)\right\|_{B^m}^2+\left\|F\left(\frac{s}{\eps^2},a^{k}\right)\right\|_{B^m}^2\right)ds\\
		&\leq \left\|A_0\right\|_{B^m}^2+C_0\int_0^t\left(\left\|a^{k+1}(s)\right\|_{B^m}^2+C_m(\|a^{k}\|_{B^m})^2\|a^{k}\|_{B^m}^2\right)ds,
	\end{align*}
	where we used that $F(\theta,0)=0$. Let us prove by induction that, for
	\[
		\overline T=\min\left(T,\frac{\log 2}{C_0},\,\sup\left\{ t>0:\;tC_0C_m(2M)^2e^{C_0t} \leq \frac{1}{2}\right\}\right),
	\]
		 we have
	\[
		\max_{0\leq t\leq \overline T}\|a^k(t)\|_{B^m}\leq 2M
	\]
	for all $k\in \NN$. This property is clearly true for $k=0$. Assume that this condition is satisfied for $k\in \NN$. By Gronwall's lemma, we obtain that this property is also true for $k+1$, since
	\[
		\left\|a^{k+1}(t)\right\|_{B^m}^2\leq e^{C_0t}M^2+ tC_0C_m(2M)^2e^{C_0t}4M^2\leq 4M^2\quad \mbox{for }0\leq t\leq \overline{T}.
	\]
	Now, for all $k\in\NN^*$, we get by Lemmas \ref{lem:bound} and \ref{lem:tame} that
	\begin{align*}
		&\|a^{k+1}(t)-a^k(t)\|_{B^m}^2\\
		&\qquad \leq C_0\int_0^t\left(\|a^{k+1}(s)-a^k(s)\|_{B^m}^2+C_m(2M)^2\|a^{k}(s)-a^{k-1}(s)\|_{B^m}^2\right)ds
	\end{align*}
	and Gronwall's lemma ensures that
	$$
		\|a^{k+1}(t)-a^k(t)\|_{B^m}^2\leq C_0C_m(2M)^2\int_0^t\left(e^{C_0(t-s)}\|a^{k}(s)-a^{k-1}(s)\|_{B^m}^2\right)ds.
	$$
Then we obtain for $t\in[0,\overline T]$ that
	\bee
		\lefteqn{\|a^{k+1}-a^k\|_{L^\infty([0,\overline T];B^m)}^2}\\
		&&\leq \overline  TC_0C_m(2M)^2e^{C_0\overline T}\|a^{k}-a^{k-1}\|_{L^\infty([0,\overline  T];B^m)}^2\leq 2^{-k}\|a^{1}-a^{0}\|_{L^\infty([0,\overline  T];B^m)}^2.
	\eee	
	 Hence the series $(a^{k+1}-a^{k})_{k\in \NN}$ converges in $\mathcal{C}([0,\overline  T];B^m(\RR^{n+d}))$ so that $(a^k)_{k\in \NN}$ converges to a solution $\beq$ of equation \eqref{eq:b1}. Let us remark that $\beq$  satisfies the uniform estimate
	\[
		\|\beq\|_{L^\infty([0,\overline  T];B^m)}\leq 2M.
	\]
		Inserting this estimate into \eqref{eq:b1} and using that $S$ is subquadratic yields a uniform estimate of $\|\partial_t \beq\|_{L^\infty([0,\overline  T];B^{m-2})}$. The uniqueness property follows also from Gronwall's lemma and from Lemma \ref{lem:tame}.
\end{proof}
%
%
%
%
%
\section{Proof of Theorem \ref{theo:limit}: the limits $\alpha\to 0$ and $\eps\to 0$}\label{sec3}

This section is devoted to the proof of Theorem \ref{theo:limit}.

\ms
\ni
	\textit{Strong confinement limits: proof of \eqref{error1} and \eqref{error2}.}
	 Let us introduce the function
	\[
		\begin{array}{llll}
			\mathscr{F}:	&\RR\times B^m(\RR^{n+d})&\longrightarrow & B^m(\RR^{n+d})\\
				&(\theta,u)&\longmapsto&\int_0^\theta (F(s,u)-F_{av}(u))ds.
		\end{array}
	\]
	which satisfies the following properties for every $u\in B^m(\RR^{n+d})$:
	\begin{enumerate}
		\item[(a)] $\theta\mapsto \mathscr{F}(\theta,u)$ is a $2\pi$-periodic function, since $\theta\mapsto F(\theta,u)$ is $2\pi$-periodic and $F_{av}$ is its average,
		\item[(b)] 	\label{eq:lemboundH1}
		if $\|u\|_{B^m}\leq M$ then
		$\|\mathscr{F}(\theta,u)\|_{B^m}\leq 4\pi C_m(M)M$ for all $\theta\in\RR$,
		where $C_m(\cdot)$ was defined in Lemma \ref{lem:tame}.
	\end{enumerate}
	 Using the relation
	\[
		\eps^2\frac{d}{dt}\left(\mathscr{F}(t/\eps^2,u(t))\right) = \left(F(t/\eps^2,u(t))-F_{av}(u(t))\right)+\eps^2 D_u\mathscr{F}(t/\eps^2,u(t))(\pa_s u(t))
	\]
	 and equations \eqref{eq:b2} and \eqref{eq:b1} (or their versions with $\alpha=0$, i.e. \eqref{eq:b3} and \eqref{eq:b6}), we obtain for all $\alpha\in[0,1]$ and $\eps\in (0,1]$,
	\bea \label{eq:lemdiff1}
		\left(\pa_t + \nabla_x S\cdot \nabla_x +\frac{\Delta_x S }{2}-\frac{i\alpha}{2}\Delta_x\right) \left(\beq-\beqo\right) =-i\left(F_{av}(\beq)-F_{av}(\beqo)\right) \nonumber\\
		 -i\eps^2 \pa_t\mathscr{F}(t/\eps^2,\beq)+i\eps^2D_u\mathscr{F}(t/\eps^2,\beq)(\pa_t \beq).\quad
	\eea	
	Hence, Lemma \ref{lem:bound} ensures that
	\bee
		&&\|\beq(t)-\beqo(t)\|^2_{B^{m-2}}\leq C \int_0^t\|\beq(s)-\beqo(s)\|^2_{B^{m-2}}ds + I_1+I_2+I_3 +I_4\\
	\eee
	where
	\bee
		&&I_1 = \int_0^t\left(\|F_{av}(\beq)-F_{av}(\beqo)\|^2_{B^{m-2}}+\eps^4\|D_u \mathscr{F}(s/\eps^2,\beq)(\pa_t\beq)\|^2_{B^{m-2}}\right)ds,\\
		&&I_2 = \sum_{|\kappa|\leq m-2}\eps^2\int_0^t\left(\pa^\kappa_x \left(\beq-\beqo\right), -i\frac{d}{ds}\pa^\kappa_x\mathscr{F}(s/\eps^2,\beq)\right)ds,\\
		&&I_3 =  \eps^2\int_0^t\left(\Lambda^{m-2}_z \left(\beq-\beqo\right),-i\frac{d}{ds}\Lambda^{m-2}_z\mathscr{F}(s/\eps^2,\beq)\right)ds,\\
		&&I_4 = \eps^2\int_0^t\left(|x|^{2({m-2})}\left(\beq-\beqo\right),-i\frac{d}{ds}\mathscr{F}(s/\eps^2,\beq)\right)ds.
	\eee
	
	Let us remark that according to Theorem \ref{theo:local1} {\em (iii)}, the sequences  $(\beq)_{\eps,\alpha}$ and $(\pa_t \beq)_{\eps,\alpha}$ are uniformly bounded, respectively in $L^\infty([0,\overline T];B^m(\RR^{n+d}))$ and in $L^\infty([0,\overline T];B^{m-2}(\RR^{n+d}))$. Moreover, since $m-2>\frac{n+d}{2}$, Remark \ref{rem:algebraBm} ensures that $B^{m-2}(\RR^{n+d})$ is an algebra, so that, for all $\theta\in[0,2\pi]$,
	\begin{align*}
		&\|D_u\mathscr{F}(\theta,\beq)(\pa_t \beq)\|_{B^{m-2}}\\
   &\ = \left\|\int_0^\theta \left(D_u F(s,\beq)-D_u F_{av}(\beq)\right)(\pa_t \beq) ds\right\|_{B^{m-2}}\\
		 &\ \leq \int_0^\theta\left\|2|e^{-is\mathcal{H}_z}\beq|^2e^{-is\mathcal{H}_z}\pa_t \beq +(e^{-is\mathcal{H}_z}\beq)^2e^{is\mathcal{H}_z}\overline{\pa_t \beq}\right\|_{B^{m-2}}ds\\
		&\quad+\frac{\theta}{2\pi}\int_0^{2\pi}\|2|e^{-is\mathcal{H}_z}\beq|^2e^{-is\mathcal{H}_z}\pa_t \beq +(e^{-is\mathcal{H}_z}\beq)^2e^{is\mathcal{H}_z}\overline{\pa_t \beq} \|_{B^{m-2}}ds
	\end{align*}
	satisfies
	\bea \label{lem:ineq1}
		\sup_{s\in[0,\overline T]}\sup_{\eps,\alpha}\|D_u\mathscr{F}(s/\eps^2,\beq(s))(\pa_t \beq(s))\|_{B^{m-2}}\leq C.
	\eea
	Using inequality \eqref{lem:ineq1} and Lemma \ref{lem:tame}, we obtain that
	\[
		I_1 \leq C \int_0^t\|\beq(s)-\beqo(s)\|^2_{B^{m-2}}ds + C\eps^4.
	\]
	Let us study the three remaining terms $I_2$, $I_3$ and $I_4$. By an integration by parts, we get that
	 \bee
	 	I_2 &=&  \sum_{|\kappa|\leq m-2}\eps^2\left(\pa^\kappa_x \left(\beq(t)-\beqo(t)\right),-i\pa^\kappa_x\mathscr{F}(t/\eps^2,\beq)\right)+ \sum_{|\kappa|\leq m-2}J^\kappa\\
		&\leq& C\eps^4+ \frac{1}{4}\|\beq(t)-\beqo(t)\|^2_{B^{{m-2}}}+ \sum_{|\kappa|\leq m-2}J^\kappa
	 \eee
	 where
	 $$
	 	J^\kappa = -\eps^2\int_0^t\left(\pa^\kappa_x \pa_t\left(\beq-\beqo\right),-i\pa^\kappa_x\mathscr{F}(s/\eps^2,\beq)\right)ds.
	 $$
	 Using again equation \eqref{eq:lemdiff1}, we get that $J^\kappa= J^\kappa_1+ J^\kappa_2 + J^\kappa_3 + J^\kappa_4 + J^\kappa_5$, where
	 \bee
	 	J^\kappa_1 &=& \eps^2\int_0^t\left(\pa^\kappa_x \left(\nabla_xS\cdot\nabla_x\right)\left(\beq-\beqo\right),-i\pa^\kappa_x\mathscr{F}(s/\eps^2,\beq)ds\right),\\
		J^\kappa_2 &=& \eps^2\int_0^t\left(\pa^\kappa_x \left( \frac{\Delta_x S}{2}-\frac{i\alpha}{2}\Delta_x\right)\left(\beq-\beqo\right), -i\pa^\kappa_x\mathscr{F}(s/\eps^2,\beq)\right)ds,\\
		J^\kappa_3 &=& -\eps^2\int_0^t\left(\pa^\kappa_x \left(F_{av}(\beq)-F_{av}(\beqo)\right),\pa^\kappa_x\mathscr{F}(s/\eps^2,\beq)\right)ds,\\
		J^\kappa_4 &=& \eps^4\int_0^t\left(\pa^\kappa_x D_u \mathscr{F}(s/\eps^2,\beq)(\pa_t \beq),\pa^\kappa_x\mathscr{F}(s/\eps^2,\beq) \right)ds,\\
		J^\kappa_5 &= &-\eps^4\int_0^t\left(\pa^\kappa_x\frac{d}{ds}\mathscr{F}(s/\eps^2,\beq),\pa^\kappa_x\mathscr{F}(s/\eps^2,\beq)\right)ds.
	 \eee
	 Using an integration by parts and the fact that the commutator $[\pa^\kappa_x,  \nabla_xS\cdot\nabla_x]$ is an operator of order $m-2$, we get that
	 \bee
	  J_1^\kappa &=& \eps^2 \int_0^t \left([\pa^\kappa_x,  \nabla_xS\cdot\nabla_x]\left(\beq-\beqo\right),-i\pa^\kappa_x\mathscr{F}(s/\eps^2,\beq)\right)ds\\
	  && + \eps^2 \int_0^t \left(\left(\nabla_xS\cdot\nabla_x\right) \pa^\kappa_x \left(\beq-\beqo\right),-i\pa^\kappa_x\mathscr{F}(s/\eps^2,\beq)\right)ds\\
	  &\leq& C\eps^2 \int_0^t \|\beq-\beqo\|_{B^{m-2}}\|\mathscr{F}(s/\eps^2,\beq)\|_{B^{m-2}}ds \\
	  && - \eps^2\int_0^t \left(\pa^\kappa_x \left(\beq-\beqo\right),-i\nabla_x \cdot \left(\nabla_x S\pa^\kappa_x\mathscr{F}(s/\eps^2,\beq)\right)\right)ds\\
	  &\leq& C\eps^2 \int_0^t \|\beq-\beqo\|_{B^{m-2}}\|\mathscr{F}(s/\eps^2,\beq)\|_{B^{m}}ds\\
	  &\leq& C\eps^4 + \int_0^t\|\beq-\beqo\|^2_{B^{{m-2}}}ds.
	 \eee
	 Since the operator $\frac{\alpha}{2}\Delta_x$ is symmetric, we obtain
	 \bee
	 	J_2^\kappa  &\leq&  C\eps^2 \int_0^t \|\beq-\beqo\|_{B^{m-2}}\|\mathscr{F}(s/\eps^2,\beq)\|_{B^{m-2}}ds\\
			&&+\eps^2\int_0^t\left(\pa^\kappa_x \left(\beq-\beqo\right),\pa^\kappa_x\frac{\alpha\Delta_x}{2}\mathscr{F}(s/\eps^2,\beq)\right)ds\\
			&\leq&C\eps^2 \int_0^t \|\beq-\beqo\|_{B^{m-2}}\|\mathscr{F}(s/\eps^2,\beq)\|_{B^{m}}ds\\
			&\leq& C\eps^4 + \int_0^t\|\beq-\beqo\|^2_{B^{{m-2}}}ds.
	 \eee
	 From the Lipschitz estimates of Lemma \ref{lem:tame}, we deduce also
	 \bee
	 	J_3^\kappa &\leq& C\eps^4 + \int_0^t\|\beq-\beqo\|^2_{B^{{m-2}}}ds
	 \eee
	 and inequality \eqref{lem:ineq1} implies $J_4^\kappa \leq C\eps^4$. Finally, we obtain that
	 $$
	 	J_5^\kappa  = -\frac{\eps^4}{2}\int_0^t\frac{d}{ds}\left\|\pa^\kappa_x\mathscr{F}(s/\eps^2,\beq)\right\|_{L^2}^2ds= -\frac{\eps^4}{2}\left\|\pa^\kappa_x\mathscr{F}(t/\eps^2,\beq)\right\|_{L^2}^2\leq 0.
	 $$
	 and, finally,
	 \[
	 	I_2 \leq C\eps^4 + C\int_0^t\|\beq(s)-\beqo(s)\|^2_{B^{m-2}}ds + \frac{1}{4}\|\beq(t)-\beqo(t)\|^2_{B^{m-2}}.
	 \]
	 	The same arguments hold for the two remaining terms $I_3$ and $I_4$. Hence, we obtain
	 \[
	 	\|\beq(t)-\beqo(t)\|^2_{B^{m-2}}\leq C\int_0^t\|\beq(s)-\beqo(s)\|^2_{B^{m-2}}ds+ C\eps^{4}
	 \]
	 so that, by Gronwall's lemma, we get \eqref{error1},
	 \[
	 	\|\beq-\beqo\|_{L^\infty([0,\overline{T}];B^{m-2})}\leq C\eps^{2} \quad \mbox{for all }\alpha\in (0,1],
	 \]
	 and \eqref{error2},
	  \[
	 	\|\beqs-\beqoo\|_{L^\infty([0,\overline{T}];B^{m-2})}\leq C\eps^{2}.
	 \]
	 \ms

	 \ni
	\subsubsection*{The semi-classical limits: proof of \eqref{error3} and \eqref{error4}.}
		The error estimates \eqref{error3} and \eqref{error4} are simple consequences of the uniform bounds given by Theorem \ref{theo:local1} {\em (iii)}. We have indeed
		\bee
			\lefteqn{\pa_t(\beq-\beqs)+\nabla_x S\cdot \nabla_x (\beq-\beqs)+\frac{\Delta_x S}{2}(\beq-\beqs)}\\
			&& = i\frac{\alpha}{2}\Delta_x (\beq-\beqs)-i(F(s/\eps^2,\beq)-F(s/\eps^2,\beqs))-i\frac{\alpha}{2}\Delta_x \beqs
		\eee
		so that, by Lemmas \ref{lem:bound} and \ref{lem:tame} and by the uniform bound for $\|\beqs\|_{L^\infty([0,\overline T];B^m)}$,
		\bee
			\lefteqn{\|\beq(t)-\beqo(t)\|^2_{B^{m-2}}\leq C\int_0^t\|\beq(s)-\beqs(s)\|^2_{B^{m-2}}ds}\\
			&&+\int_0^t\left(\|F(s/\eps^2,\beq(s))-F(s/\eps^2,\beqo(s))\|^2_{B^{m-2}}ds+\frac{\alpha}{2}\| \beqs(s)\|^2_{B^m}\right)ds\\
			&&\leq C\alpha+C\int_0^t\|\beq(s)-\beqs(s)\|^2_{B^{m-2}}ds
		\eee and by Gronwall's lemma
		\[
			\|\beq-\beqo\|^2_{\mathcal{C}([0,\overline T];B^{m-2})}\leq C\alpha.
		\]
		The proof of \eqref{error3} is complete. The same proof holds for \eqref{error4}, replacing the function $F$ by $F_{av}$. The proof of Theorem \ref{theo:limit} is complete.

	\qed
\subsection*{Acknowledgment}
\ni
This work was supported by the Singapore A*STAR SERC  PSF-Grant 1321202067 (W.B.)
and by the ANR-FWF Project Lodiquas ANR-11-IS01-0003 (L.L.T. and F.M.).

\bibliographystyle{siam}

\bigskip

 \end{document}